\DeclareMathOperator{\Card}{Card}
\newcommand\CR{\mathcal{CR}}
\newcommand\RR{\mathcal{R}}
\title{Enumeration formul{\ae} in neutral sets}
\author{Francesco Dolce\inst{1} \and Dominique Perrin\inst{1}}
\institute{Universit{\'e} Paris Est, LIGM}
    \def\soft#1{\leavevmode\setbox0=\hbox{h}\dimen7=\ht0\advance
    \dimen7 by-1ex\relax\if t#1\relax\rlap{\raise.6\dimen7
    \hbox{\kern.3ex\char'47}}#1\relax\else\if T#1\relax
    \rlap{\raise.5\dimen7\hbox{\kern1.3ex\char'47}}#1\relax
    \else\if d#1\relax\rlap{\raise.5\dimen7\hbox{\kern.9ex
    \char'47}}#1\relax\else\if D#1\relax\rlap{\raise.5\dimen7
    \hbox{\kern1.4ex\char'47}}#1\relax\else\if l#1\relax
    \rlap{\raise.5\dimen7\hbox{\kern.4ex\char'47}}#1\relax
    \else\if L#1\relax\rlap{\raise.5\dimen7\hbox{\kern.7ex
    \char'47}}#1\relax\else\message{accent \string\soft
    \space #1 not defined!}#1\relax\fi\fi\fi\fi\fi\fi} 
\date{\dateandtime}
\begin{document}

\maketitle

\begin{abstract}
We present several enumeration results holding in sets of words called neutral and which satisfy restrictive conditions on the set of possible extensions of nonempty words.
These formulae concern return words and bifix codes.
They generalize formulae previously known for Sturmian sets or more generally for tree sets.
We also give a geometric example of this class of sets, namely the natural coding of some interval exchange transformations.

\keywords{Neutral Sets, Bifix Codes, Interval Exchanges.}
\end{abstract}

\section{Introduction}
\label{sec:intro}
Sets of words of linear complexity play an important role in combinatorics on words and symbolic dynamics.
This family of sets includes Sturmian sets, interval exchange sets and primitive morphic sets, that is, sets of factors of fixed points of primitive morphisms.

We study here a family of sets of linear complexity, called neutral sets.
They are defined by a property of a graph $E(x)$ associated to each word $x$, called its extension graph and which expresses the possible extensions of $x$ on both sides by a letter of the alphabet $A$.
A set $S$ is neutral if the Euler characteristic of the graph of any nonempty word is equal to $1$, as for a tree.
The Euler characteristic of the graph $E(\varepsilon)$ is called the characteristic of $S$ and is denoted $\chi(S)$.
These sets were first considered in~\cite{BalkovaPelantovaSteiner2008} and in~\cite{acyclicconnectedandtreesets}.
The factor complexity of a neutral set $S$ on $k$ letters is for $n \neq 1$
\begin{equation}
p_n = n(k-\chi(S)) + \chi(S).
\label{eq:complexity}
\end{equation}

We prove here several results concerning neutral sets.
The first one (Theorem~\ref{theo:cardinality}) is a formula giving the cardinality of a finite $S$-maximal bifix code of $S$-degree $n$ in a recurrent neutral set $S$ on $k$ letters as
\begin{equation}
\Card(X) = n(k-\chi(S)) + \chi(S).
\label{eq:cardbifix}
\end{equation}

The remarkable feature is that, for fixed $S$, the cardinality of $X$ depends only on its $S$-degree.
In the particular case where $X$ is the set of all words of $S$ of length $n$, we recover Equation~\eqref{eq:complexity}.
Formula~\eqref{eq:cardbifix} generalizes the formula proved in~\cite{BerstelDeFelicePerrinReutenauerRindone2012} for Sturmian sets and in~\cite{BertheDeFeliceDolceLeroyPerrinReutenauerRindone2013b} for neutral sets of characteristic $1$.

The second one concerns return words.
The set of right first return words to a word $x$ in a factorial set $S$, denoted $\RR_S(x)$, is an important notion.
It is the set of words $u$ such that $xu$ is in $S$ and ends with $x$ for the first time.
In several families of sets of linear complexity, the set of first return words to $x$ is known to be of fixed cardinality independent of $x$.
This was proved for Sturmian words in~\cite{JustinVuillon2000}, for interval exchange sets in ~\cite{Vuillon2007} and for neutral sets of characteristic zero in~\cite{BalkovaPelantovaSteiner2008}.

We first prove here  (Theorem~\ref{theo:completereturn}) that the set $\CR_S(X)$ of complete first return words to a bifix code $X$ in a uniformly recurrent neutral set $S$ on $k$ letters satisfies $\Card(\CR_S(X)) = \Card(X) + k - \chi(S)$.
The remarkable feature here is that, for fixed $S$, the cardinality of $\CR_S(X)$ depends only on $\Card(X)$.
When $X$ is reduced to one element $x$, we have $\CR_S(x) = x\RR_S(x)$ and we recover the result of~\cite{BalkovaPelantovaSteiner2008}.
When $X = S \cap A^n$, then $\CR_S(X) = S \cap A^{n+1}$.
This implies $p_{n+1} = p_n + k - \chi(S)$ and also gives Equation~\eqref{eq:complexity} by induction on $n$.
The proofs of these formul\ae \ use a probability distribution naturally defined on a neutral set.

A third result concerns the decoding of a neutral set by a bifix code.
We prove that the decoding of any recurrent neutral set $S$ by an $S$-maximal bifix code is a neutral set.
This property is proved for uniformly recurrent tree sets in~\cite{BertheDeFeliceDolceLeroyPerrinReutenauerRindone2013m}.

We finally prove a result which allows one to obtain a large family of neutral sets of geometric origin, namely using interval exchange transformations.
More precisely, we prove that the natural coding of an interval exchange transformation without connections of length $\geq 1$ is a neutral set.
This extends a result in~\cite{bifixcodesandintervalexchanges} concerning interval exchange without connections as well as a result of~\cite{BertheDelecroixDolcePerrinReutenauerRindone2014} concerning linear involutions without connetion.

\paragraph{Acknowledgement.} This work was supported by grants from R\'egion \^Ile-de-France and ANR project Eqinocs.

\section{Extension graphs}
\label{sec:graph}
Let $A$ be a finite alphabet.
We denote by $A^*$ the set of all words on $A$.
We denote by $\varepsilon$ or $1$ the empty word.
A set of words on the alphabet $A$ is said to be \emph{factorial} if it contains the factors of its elements. 
An \emph{internal factor} of a word $x$ is a word $v$ such that $x = uvw$ with $u,w$ nonempty.

Let $S$ be a factorial set on the alphabet $A$.
For $w\in S$, we denote $L_S(w) = \{ a \in A \mid aw \in S \}, \quad R_S(w) = \{ a \in A \mid wa \in S \}, E_S(w) = \{ (a,b) \in A \times A \mid awb \in S\}$, and further
$\ell_S(w) = \Card(L_S(w))$,  $r_S(w) = \Card(R_S(w))$,  $e_S(w) = \Card(E_S(w))$.

We omit the subscript $S$ when it is clear from the context.
A word $w$ is \emph{right-extendable} if $r(w)>0$, \emph{left-extendable} if $\ell(w)>0$ and \emph{biextendable} if $e(w)>0$.
A factorial set $S$ is called \emph{right-extendable} (resp. \emph{left-extendable}, resp. \emph{biextendable}) if every word in $S$ is right-extendable (resp. left-extendable, resp. biextendable).

A word $w$ is called \emph{right-special} if $r(w) \geq 2$.
It is called \emph{left-special} if $\ell(w) \geq 2$. 
It is called \emph{bispecial} if it is both left-special and right-special.
For $w\in S$, we denote 
$$
m_S(w) = e_S(w) - \ell_S(w) - r_S(w) + 1.
$$

A word $w$ is called \emph{neutral} if $m_S(w) = 0$.
We say that a set $S$ is \emph{neutral} if it is factorial and every nonempty word $w \in S$ is neutral. The \emph{characteristic} of $S$ is the integer
$\chi(S) = 1 - m_S(\varepsilon)$.

Thus, a neutral set of characteristic $1$ is such that all words (including the empty word) are neutral.
This is what is called a neutral set in~\cite{acyclicconnectedandtreesets}.

The following example of a neutral set is from~\cite{acyclicconnectedandtreesets}.

\begin{example}
\label{ex:cassaigne}
Let $A = \{ a,b,c,d \}$ and let $\sigma$ be the morphism from $A^*$ into itself defined by 
$\sigma: a \mapsto ab,\ b \mapsto cda,\ c \mapsto cd,\ d \mapsto abc$.
Let $S$ be the set of factors of the infinite word $x = \sigma^\omega(a)$.
One has $S\cap A^2 = \{ ab,ac,bc,ca,cd,da \}$ and thus $m(\varepsilon) = -1$.
It is shown in~\cite{acyclicconnectedandtreesets} that every nonempty word is neutral.
Thus $S$ is neutral of characteristic $2$.
\end{example}

A set of words $S \neq \{ \varepsilon \}$ is \emph{recurrent} if it is factorial and for any $u,w \in S$, there is a $v \in S$ such that $uvw \in S$.
An infinite factorial set is said to be \emph{uniformly recurrent} if for any word $u \in S$ there is an integer $n \geq 1$ such that $u$ is a factor of any word of $S$ of length $n$.
A uniformly recurrent set is recurrent.

The \emph{factor complexity} of a factorial set $S$ of words on an alphabet $A$ is the sequence $p_n = \Card(S \cap A^n)$.
Let $s_n = p_{n+1} - p_n$ and $b_n = s_{n+1} - s_n$ be respectively the first and second order differences sequences of the sequence $p_n$. 

The following result is~\cite[Proposition 3.5]{Cassaigne1997} (see also \cite[Theorem 4.5.4]{BertheRigo2010}).

\begin{proposition}
\label{pro:cant}
Let $S$ be a factorial set on the alphabet $A$.
One has $b_n = \sum_{w \in S \cap A^n} m(w)$ and $s_n = \sum_{w \in S \cap A^n} (r(w)-1)$ for all $n \geq 0$.
\end{proposition}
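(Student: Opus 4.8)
The plan is to prove both identities by elementary double counting, using only that $S$ is factorial. I would start with $s_n$. Since $S$ is factorial, every word $v \in S \cap A^{n+1}$ factors uniquely as $v = wa$ with $w \in S \cap A^n$ its prefix of length $n$ and $a \in R_S(w)$; conversely $wa \in S \cap A^{n+1}$ for every $w \in S \cap A^n$ and $a \in R_S(w)$. This gives $p_{n+1} = \sum_{w \in S \cap A^n} r(w)$, whence
\[
s_n = p_{n+1} - p_n = \sum_{w \in S \cap A^n} (r(w) - 1).
\]
The symmetric decomposition $v = aw$ with $w \in S \cap A^n$ the suffix of length $n$ and $a \in L_S(w)$ similarly yields $p_{n+1} = \sum_{w \in S \cap A^n} \ell(w)$, which I will reuse below.

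For $b_n = s_{n+1} - s_n$, I would apply the formula just obtained at rank $n+1$ and then re-expand the sum over $S \cap A^{n+1}$ via the left decomposition $v = aw$. For fixed $w \in S \cap A^n$ and $a \in L_S(w)$, the right extensions of $v = aw$ are exactly the letters $b$ with $awb \in S$, that is, with $(a,b) \in E_S(w)$; so $r(aw)$ is the out-degree of $a$ in the extension graph of $w$, and summing over $a \in L_S(w)$ gives $\sum_{a \in L_S(w)} r(aw) = e_S(w)$. Hence $\sum_{v \in S \cap A^{n+1}} r(v) = \sum_{w \in S \cap A^n} e(w)$, and together with $p_{n+1} = \sum_{w \in S \cap A^n} \ell(w)$ this gives $s_{n+1} = \sum_{w \in S \cap A^n} (e(w) - \ell(w))$. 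Subtracting the expression for $s_n$,
\[
b_n = s_{n+1} - s_n = \sum_{w \in S \cap A^n} \bigl(e(w) - \ell(w) - r(w) + 1\bigr) = \sum_{w \in S \cap A^n} m(w).
\]

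Finally I would check the degenerate ranks: for $n = 0$ one has $S \cap A^0 = \{\varepsilon\}$ when $S$ is nonempty, and the identities read $s_0 = r(\varepsilon) - 1$ and $b_0 = e(\varepsilon) - \ell(\varepsilon) - r(\varepsilon) + 1 = m(\varepsilon)$, consistent with the general computation; the cases $S = \emptyset$ or $S = \{\varepsilon\}$ are trivial. I do not expect a real obstacle here. The only point needing a little care is that the three decompositions of a word of length $n+1$ used above — as (prefix of length $n$)$\cdot$(letter), as (letter)$\cdot$(suffix of length $n$), and the induced counting of right extensions through $E_S(w)$ — are genuine bijections, and this is precisely where factoriality of $S$ enters.
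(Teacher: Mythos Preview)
Your argument is correct and is exactly the standard double-counting proof of this classical fact. The paper itself does not give a proof: it states the proposition as a citation (Cassaigne~1997, and Berth\'e--Rigo), so there is no in-paper argument to compare against; your write-up is precisely what one finds in those references.
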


One deduces easily from Proposition~\ref{pro:cant} the following result which shows that a neutral set has linear complexity.

\begin{proposition}
\label{prop:complexity}
The factor complexity of a neutral set on $k$ letters is given by $p_0 = 1$ and $p_n = n(k - \chi(S)) + \chi(S)$ for every $n \geq 1$.
\end{proposition}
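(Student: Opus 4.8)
The plan is to compute the complexity directly from Cassaigne's formula (Proposition~\ref{pro:cant}) using the neutrality hypothesis. First I would recall that, by definition of a neutral set, $m(w) = 0$ for every nonempty $w \in S$, while $m(\varepsilon) = 1 - \chi(S)$ by definition of the characteristic. Plugging this into the identity $b_n = \sum_{w \in S \cap A^n} m(w)$ from Proposition~\ref{pro:cant} gives $b_0 = m(\varepsilon) = 1 - \chi(S)$ and $b_n = 0$ for all $n \geq 1$.

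Next, since $b_n = s_{n+1} - s_n$, the vanishing of $b_n$ for $n \geq 1$ shows that the sequence $(s_n)_{n \geq 1}$ is constant; call its common value $s$. To identify $s$, I would use $b_0 = s_1 - s_0 = 1 - \chi(S)$ together with the value of $s_0 = p_1 - p_0$. Here one must be slightly careful: a neutral set need not contain all $k$ letters a priori, but the factor complexity formula is stated for a set "on $k$ letters," so $p_1 = k$ and $p_0 = 1$, hence $s_0 = k - 1$. Therefore $s_1 = s_0 + (1 - \chi(S)) = k - \chi(S)$, and consequently $s_n = k - \chi(S)$ for all $n \geq 1$.

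Finally I would reconstruct $p_n$ by telescoping: for $n \geq 1$,
\[
p_n = p_1 + \sum_{j=1}^{n-1} s_j = k + (n-1)(k - \chi(S)) = n(k - \chi(S)) + \chi(S),
\]
which is the claimed formula, and $p_0 = 1$ by convention. I do not anticipate a genuine obstacle here — the argument is a short unwinding of Proposition~\ref{pro:cant}. The only point requiring a line of justification is the evaluation $s_0 = k-1$, i.e. that the relevant alphabet size is exactly $k$; this is where the hypothesis "on $k$ letters" is used, and it is also implicitly why the formula can fail at $n = 1$ if one instead took $S$'s own alphabet to be smaller. One could alternatively avoid $s_0$ entirely by checking the formula at $n=1$ directly and then inducting via $p_{n+1} = p_n + s$ with $s = k - \chi(S)$ pinned down from $b_1 = 0$ once $s_1$ and $s_2$ are related — but the telescoping version above is cleaner.
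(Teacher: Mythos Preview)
Your argument is correct and is exactly the unwinding of Proposition~\ref{pro:cant} that the paper has in mind when it says the result ``is deduced easily'' from that proposition; the paper gives no further details. Your side remark about an alternative that avoids $s_0$ is slightly muddled (knowing $b_1=0$ only gives $s_1=s_2$, not their common value), but this does not affect the main line, which is clean and complete.
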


Let $S$ be a biextendable set of words.
For $w \in S$, we consider the set $E(w)$ as an undirected graph on the set of vertices which is the disjoint union of $L(w)$ and $R(w)$ with edges the pairs $(a,b) \in E(w)$.
This graph is called the \emph{extension graph} of $w$.
We sometimes denote $1 \otimes L(w)$ and $R(w) \otimes 1$ the copies of $L(w)$ and $R(w)$ used to define the set of vertices of $E(w)$.
We note that since $E(w)$ has $\ell(w)+r(w)$ vertices
and $e(w)$ edges, the number $1-m_S(w)$ is the Euler characteristic of the graph $E(w)$.

A biextendable set $S$ is called a \emph{tree set} of characteristic $c$ if for any nonempty $w \in S$, the graph $E(w)$ is a tree and if $E(\varepsilon)$ is a union of $c$ trees (the definition of tree set in~\cite{acyclicconnectedandtreesets} corresponds to a tree set of characteristic $1$).
Note that a tree set of characteristic $c$ is a neutral set of characteristic $c$.

\begin{example}
\label{ex:cassaigne2}
Let $S$ be the neutral set of Example~\ref{ex:cassaigne}.
The graph $E(\varepsilon)$ is represented in Figure~\ref{fig:julien}.
It is acyclic with two connected components.
It is shown in~\cite{acyclicconnectedandtreesets} that the extension graph of any nonempty word is a tree.
Thus $S$ is a tree set of characteristic $2$.
\end{example}

\begin{figure}[hbt]
\centering
\gasset{Nadjust=wh,AHnb=0}

\begin{picture}(40,5)
\node(a)(0,10){$a$}
\node(a')(40,0){$a$}
\node(b)(0,0){$b$}
\node(b')(10,10){$b$}
\node(c)(10,0){$c$}
\node(c')(30,10){$c$}
\node(d)(30,0){$d$}
\node(d')(40,10){$d$}

\drawedge(a,b'){}
\drawedge(a,c){}
\drawedge(b,c){}
\drawedge(c',a'){}
\drawedge(c',d'){}
\drawedge(d,a'){}
\end{picture}
\caption{The graph $E(\varepsilon$).}
\label{fig:julien}
\end{figure}
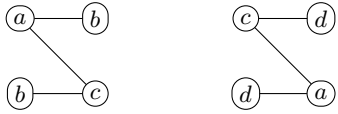

Let $S$ be a factorial set containing the alphabet $A$.
For $x \in S$, we define
$$
\rho_S(x) = e_S(x) - \ell_S(x), \quad \lambda_S(x) = e_S(x)-r_S(x).
$$
Thus, when $x$ is neutral, $\rho_S(x) = r_S(x)-1$ and $\lambda_S(x) = \ell_S(x)-1$.
The following result shows that in a biextendable neutral set, $\rho_S$ is a left probability distribution on $S$ (and $\lambda_S$ is a right probability), except for the value on $\varepsilon$ which 
is $\rho(\varepsilon)=e(\varepsilon)-\ell(\varepsilon) = m(\varepsilon)+r(\varepsilon)-1=\Card(A)-\chi(S)$
and can be different of $1$ (see~\cite{BerstelDeFelicePerrinReutenauerRindone2012} for the definition of a right or left probability distribution).
We omit the subscript $S$ when it is clear from the context.

\begin{proposition}
\label{pro:proba}
Let $S$ be a biextendable neutral set containing $A$.
Then for any $x \in S$, one has $\lambda_S(x), \rho_S(x) \geq 0$ and
$$
\sum_{a \in L(x)} \rho_S(ax) = \rho_S(x), \quad \sum_{a \in R(x)} \lambda_S(xa) = \lambda_S(x).
$$
\end{proposition}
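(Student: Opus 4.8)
The plan is to prove the two identities separately (they are symmetric under reversal, so only one needs real work), and to obtain nonnegativity as an immediate byproduct of neutrality. For nonnegativity: since $S$ is neutral, every nonempty $x\in S$ satisfies $m_S(x)=0$, hence $\rho_S(x)=e_S(x)-\ell_S(x)=r_S(x)-1\geq 0$ because $x\in S$ is right-extendable (it lies in the biextendable set $S$), and similarly $\lambda_S(x)=\ell_S(x)-1\geq 0$. For $x=\varepsilon$ we have $\rho_S(\varepsilon)=\Card(A)-\chi(S)$, which is $\geq 0$ since $p_1=\Card(A)$ and $p_2\geq p_1$ forces $\Card(A)-\chi(S)\geq 0$ via Proposition~\ref{prop:complexity} (or directly: $p_2=2(\Card(A)-\chi(S))+\chi(S)\geq \Card(A)$).

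For the first identity, I would rewrite both sides in terms of $\rho$ only when $x\neq\varepsilon$, but it is cleaner to argue directly with extension counts so the empty-word case is handled uniformly. The key combinatorial fact is a double-counting of the set $\{(a,b,c)\in A^3 \mid axbc \in S\}$, or equivalently of edges in extension graphs of the words $ax$ for $a\in L(x)$. Concretely, $\sum_{a\in L(x)} e_S(ax)$ counts triples $(a,b,c)$ with $a\in A$, $(b,c)\in E_S(ax)$, i.e.\ $axbc\in S$; and $\sum_{a\in L(x)}\ell_S(ax)=\Card\{(a,a')\in A^2 \mid a'ax\in S\}$. Meanwhile $e_S(x)=\Card\{(b,c) \mid xbc\in S\}$... wait — that is not quite $E_S(x)$. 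Let me instead use the standard identity: for any word $w$, $\sum_{a\in L(w)} r_S(aw)$ counts pairs giving $awb\in S$, which is exactly $e_S(w)$; and $\sum_{a\in L(w)}\ell_S(aw)$ counts $a'aw\in S$. The cleanest route is: apply neutrality to each nonempty $ax$ to get $\rho_S(ax)=r_S(ax)-1$, then observe $\sum_{a\in L(x)} r_S(ax)=e_S(x)$ (double-counting words $axb\in S$, grouped by their first letter) and $\sum_{a\in L(x)} 1 = \ell_S(x)$, so $\sum_{a\in L(x)}\rho_S(ax)=e_S(x)-\ell_S(x)=\rho_S(x)$ — and this last equality holds by definition of $\rho_S(x)$ with no neutrality needed on $x$ itself, so the formula is valid for $x=\varepsilon$ too, provided every $ax$ with $a\in L(x)$ is nonempty, which holds since $a$ is a letter.

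The second identity follows by the mirror argument: $\sum_{a\in R(x)}\lambda_S(xa)=\sum_{a\in R(x)}(\ell_S(xa)-1)=e_S(x)-r_S(x)=\lambda_S(x)$, using $\sum_{a\in R(x)}\ell_S(xa)=e_S(x)$ (double-counting $bxa\in S$ by last letter) and neutrality of each nonempty $xa$. The only subtlety — and the one place to be careful — is making sure neutrality is legitimately invoked only on nonempty words: each $ax$ (resp.\ $xa$) has length $\geq 1$ since $a\in A$, so this is fine regardless of whether $x$ is empty. I do not expect any real obstacle here; the main point is simply to set up the two double-counting identities $\sum_{a\in L(x)} r_S(ax)=e_S(x)$ and $\sum_{a\in R(x)} \ell_S(xa)=e_S(x)$ correctly and then apply $m_S\equiv 0$ on nonempty words.
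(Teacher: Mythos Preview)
Your proof is correct and matches the paper's argument for the summation identities: both use neutrality of the nonempty words $ax$ to rewrite $\rho_S(ax)=r_S(ax)-1$, then the double-count $\sum_{a\in L(x)}r_S(ax)=e_S(x)$ together with $\sum_{a\in L(x)}1=\ell_S(x)$ to obtain $\rho_S(x)=e_S(x)-\ell_S(x)$, and argue symmetrically for $\lambda$. For nonnegativity the paper takes a slightly shorter route that avoids your case split on $x=\varepsilon$: biextendability alone gives $\ell_S(x),r_S(x)\le e_S(x)$ for every $x\in S$ (each $a\in L_S(x)$ has $ax\in S$ right-extendable, so some $(a,b)\in E_S(x)$, and dually), whence $\rho_S(x),\lambda_S(x)\ge 0$ without invoking neutrality or the complexity formula.
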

\begin{proof}
Since $S$ is  biextendable, we have $\ell(x), r(x) \leq e(x)$.
Thus $\lambda(x), \rho(x) \geq 0$.
Next,
$\sum_{a \in L(x)}\rho(ax) = \sum_{a \in L(x)}(r(ax)-1) = e(x) - \ell(x) = \rho(x)$.
The proof for $\lambda$ is symmetric.
\end{proof}

If $\rho(\varepsilon) = 0$, then $\rho(x) = 0$ for all $x \in S$.
Otherwise, $\rho'(x) = \rho(x) / \rho(\varepsilon)$ is a left probability distribution.
A symmetric result holds for $\lambda$.

\section{Bifix codes}
\label{sec:bifix}
A prefix code is a set of nonempty words which does not contain any proper prefix of its elements.
A suffix code is defined symmetrically.
A \emph{bifix code} is a set which is both a prefix code and a suffix code (see~\cite{BerstelPerrinReutenauer2009} for a more detailed introduction).
Let $S$ be a recurrent set.
A prefix (resp. bifix) code $X \subset S$ is $S$-maximal if it is not properly contained in a prefix (resp. bifix) code $Y \subset S$.
Since $S$ is recurrent, a finite $S$-maximal bifix code is also an $S$-maximal prefix code (see~\cite[Theorem 4.2.2]{BerstelDeFelicePerrinReutenauerRindone2012}).
For example, for any $n \geq 1$, the set $X = S \cap A^n$ is an $S$-maximal bifix code.

Given a set $X$, we denote $\rho(X) = \sum_{x \in X}\rho(x)$.
We prove the following result.
It accounts for the fact that, in a Sturmian set $S$, any finite $S$-maximal suffix codes contains exactly one right-special word \cite[Proposition 5.1.5]{BerstelDeFelicePerrinReutenauerRindone2012}.

\begin{proposition}
\label{pro:maxsuf}
Let $S$ be a neutral set containing $A$, and let $X$ be a finite $S$-maximal suffix code.
Then $\rho(X) = \Card(A) - \chi(S)$.
\end{proposition}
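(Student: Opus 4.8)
The plan is to derive the identity from Proposition~\ref{pro:proba}, which says that $\rho$ is conserved under left extension: $\sum_{a\in L(x)}\rho(ax)=\rho(x)$ for every $x\in S$. Since a suffix code is generated from $\varepsilon$ by prepending letters, I would introduce $P$, the set of proper suffixes of the words of $X$; note that $\varepsilon\in P$, that $P\subseteq S$ by factoriality, and that $P$ is finite because $X$ is. The goal is then to telescope the conservation relation along $P$, from $\varepsilon$ outwards to $X$.

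The combinatorial heart of the argument is a description of finite $S$-maximal suffix codes: $AP\cap S=(P\cup X)\setminus\{\varepsilon\}$, a \emph{disjoint} union. Disjointness holds because $X$ is a suffix code (so $P\cap X=\emptyset$) and because each word of $S$ of length at least $1$ factors uniquely as (first letter)(remaining suffix). The inclusion $\supseteq$ is immediate from the definition of $P$ and from $X\subseteq S$. The inclusion $\subseteq$ is exactly where $S$-maximality enters: if some $w=ap$ with $p\in P$, $a\in A$, $w\in S$ were neither a proper suffix of a word of $X$ (i.e.\ in $P$) nor a word of $X$, then $X\cup\{w\}$ would be a suffix code strictly containing $X$, contradicting maximality. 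This step uses no recurrence hypothesis, since we only compare $X$ with $X\cup\{w\}$.

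With this in hand I would compute, applying Proposition~\ref{pro:proba} to each $p\in P$ — legitimate since every word $ap$ with $a\in L(p)$ is nonempty, hence neutral, and for $p=\varepsilon$ one has $L(\varepsilon)=A$ because $A\subseteq S$ —
\[
\sum_{p\in P}\rho(p)=\sum_{p\in P}\ \sum_{a\in L(p)}\rho(ap)=\sum_{w\in AP\cap S}\rho(w)=\sum_{w\in P\setminus\{\varepsilon\}}\rho(w)+\sum_{w\in X}\rho(w),
\]
where the middle equality is the reindexing via the bijection $(p,a)\mapsto ap$ and the last one is the disjoint decomposition above. Cancelling the common term $\sum_{p\in P\setminus\{\varepsilon\}}\rho(p)$ leaves $\rho(\varepsilon)=\rho(X)$. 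Finally, as already computed just before the statement, $\rho(\varepsilon)=e(\varepsilon)-\ell(\varepsilon)=m(\varepsilon)+r(\varepsilon)-1=\Card(A)-\chi(S)$ (using $R(\varepsilon)=A$), which gives the claimed value of $\rho(X)$.

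The only subtle points are bookkeeping: getting the index sets of the sums right and, in particular, tracking $\varepsilon$, which lies in $P$ and contributes $\rho(\varepsilon)$ on the left-hand side of the telescoping sum but is deliberately excluded on the right-hand side — and this asymmetry is precisely what produces the formula. Everything quantitative is packaged in Proposition~\ref{pro:proba}, so there is no analytic obstacle; the work is in the correct formulation and proof of the maximal-suffix-code characterization.
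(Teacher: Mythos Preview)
Your argument is correct and gives a self-contained proof where the paper instead invokes an external result. The paper's proof splits into two cases: if $\rho(\varepsilon)=0$ then $\chi(S)=\Card(A)$ and the claim is trivial; otherwise $\rho'=\rho/\rho(\varepsilon)$ is a genuine left probability distribution, whereupon the paper cites a standard fact about maximal suffix codes and probability distributions (\cite[Proposition~3.3.4]{BerstelDeFelicePerrinReutenauerRindone2012}) to conclude $\rho'(X)=1$. Your telescoping computation is precisely what underlies that cited fact, and by working directly with $\rho$ rather than normalizing you avoid the case distinction entirely --- a modest but real gain. One small point to tighten in your verification that $X\cup\{w\}$ is a suffix code: you check that $w$ is not a proper suffix of any $x\in X$ (via $w\notin P$), but you should also note that no $x\in X$ is a proper suffix of $w=ap$; this holds because any such $x$ would have $|x|\le|p|$ and hence be a suffix of $p\in P$, making $x$ a proper suffix of some element of $X$ and contradicting that $X$ is a suffix code.
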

\begin{proof}
If $\rho(\varepsilon) = 0$, then $\chi(S) = \Card(A)$ and thus the formula holds.
Otherwise, $\rho'$ is a left probability distribution (as seen at the end of Section~\ref{sec:graph}), and the formula holds by a well-known property of suffix codes (see~\cite[Proposition 3.3.4]{BerstelDeFelicePerrinReutenauerRindone2012}).
\end{proof}

\begin{example}
Let $S$ be the neutral set of characteristic $2$ of Example~\ref{ex:cassaigne}.
The set $X = \{ a,ac,b,bc,d \}$ is an $S$-maximal suffix code (its reversal is the $\tilde{S}$-maximal prefix code $\tilde{X} = \{ a,b,ca,cb,d \}$).
The values of $\rho$ on $X$ are represented in Figure~\ref{fig:maxsuf} on the left.
One has $\rho(X) = \rho(a) + \rho(bc) = 2$, in agreement with Proposition~\ref{pro:maxsuf}.
\end{example}

\begin{figure}[hbt]
\centering
\gasset{AHnb=0,Nadjust=wh}
\begin{picture}(110,30)(0,-2)

\put(0,0){
\begin{picture}(20,30)
\node(0)(20,15){$2$}
\node[Nmr=0](a)(10,30){$1$}
\node[Nmr=0](b)(10,20){$0$}
\node(c)(10,10){$1$}
\node[Nmr=0](d)(10,0){$0$}
\node[Nmr=0](ac)(0,15){$0$}
\node[Nmr=0](bc)(0,5){$1$}

\drawedge(a,0){$a$}
\drawedge(b,0){$b$ \ }
\drawedge(c,0){$c$}
\drawedge[ELside=r](d,0){$d$}
\drawedge(ac,c){$a$}
\drawedge[ELside=r](bc,c){$b$}
\end{picture}
}

\put(38,0){
\begin{picture}(20,30)
\node(0)(0,15){$2$}
\node(a)(10,28){$1$}
\node(b)(10,15){$0$}
\node[Nmr=0](c)(10,8){$1$}
\node(d)(10,0){$0$}
\node[Nmr=0](ab)(20,30){$0$}
\node(ac)(20,24){$0$}
\node(bc)(20,15){$1$}
\node[Nmr=0](da)(20,0){$1$}
\node[Nmr=0](acd)(30,24){$0$}
\node[Nmr=0](bca)(30,18){$0$}
\node[Nmr=0](bcd)(30,12){$0$}

\drawedge(0,a){$a$}
\drawedge(0,b){$b$}
\drawedge(0,c){$c$}
\drawedge[ELside=r](0,d){$d$}
\drawedge(a,ab){$b$}
\drawedge[ELside=r](a,ac){$c$}
\drawedge(b,bc){$c$}
\drawedge(d,da){$a$}
\drawedge(ac,acd){$d$}
\drawedge(bc,bca){$a$}
\drawedge[ELside=r](bc,bcd){$d$}
\end{picture}
}

\put(80,0){
\begin{picture}(20,30)
\node(0)(30,15){$2$}
\node(a)(20,27){$1$}
\node(b)(20,19){$0$}
\node[Nmr=0](c)(20,11){$1$}
\node(d)(20,3){$0$}
\node[Nmr=0](ca)(10,30){$0$}
\node(da)(10,24){$1$}
\node(ab)(10,19){$0$}
\node[Nmr=0](cd)(10,3){$0$}
\node[Nmr=0](bca)(0,30){$0$}
\node[Nmr=0](acd)(0,6){$0$}
\node[Nmr=0](bcd)(0,0){$0$}

\drawedge[ELside=r](0,a){$a$}
\drawedge[ELside=r](0,b){$b$ \ \ }
\drawedge[ELside=r](0,c){$c$ \ }
\drawedge(0,d){$d$}
\drawedge(a,da){$d$}
\drawedge[ELside=r](a,ca){$c$}
\drawedge(b,ab){$a$}
\drawedge[ELside=r](d,cd){$c$}
\drawedge[ELside=r](ca,bca){$b$}
\drawedge[ELside=r](cd,acd){$a$}
\drawedge(cd,bcd){$b$}
\end{picture}
}
\end{picture}
\caption{An $S$-maximal suffix code (left) and an $S$-maximal bifix code represented as a prefix code (center) and as a suffix code (right).}
\label{fig:maxsuf}
\end{figure}
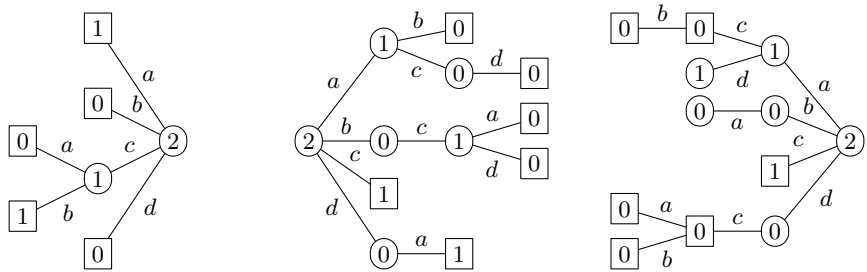

Let $X$ be a bifix code.
Let $Q$ be the set of words without any suffix in $X$ and let $P$ be the set of words without any prefix in $X$.
A \emph{parse} of a word $w$ with respect to a bifix code $X$ is a triple $(q,x,p) \in Q \times X^* \times P$ such that $w = qxp$.
We denote by $d_X(w)$ the number of parses of a word $w$ with respect to $X$.
The $S$-degree of $X$, denoted $d_X(S)$ is the maximal number of parses with respect to $X$ of a word of $S$.
For example, the set $X = S \cap A^n$ has $S$-degree $n$.

\begin{example}
\label{ex:maxbifix}
Let $S$ be the neutral set of characteristic $2$ of Example~\ref{ex:cassaigne}.
The set $X = \{ ab,acd,bca,bcd,c,da \}$ is an $S$-maximal bifix code of $S$-degree $2$ (see Figure~\ref{fig:maxsuf} on the center and the right).
\end{example}

Let $S$ be a recurrent set and let $X$ be a finite bifix code.
By~\cite[Theorem 4.2.8]{BerstelDeFelicePerrinReutenauerRindone2012}, $X$ is $S$-maximal if and only if its $S$-degree is finite.
Moreover, in this case, a word $w \in S$ is such that $d_X(w) < d_X(S)$ if and only if it is an internal factor of a word of $X$.
The following is~\cite[Theorem 4.3.7]{BerstelDeFelicePerrinReutenauerRindone2012}.

\begin{theorem}
\label{theo:union}
Let $S$ be a recurrent set and let $X$ be a finite $S$-maximal bifix code of $S$-degree $n$.
The set of nonempty proper prefixes of $X$ is a disjoint union of $n-1$ $S$-maximal suffix codes.
\end{theorem}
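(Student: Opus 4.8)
The plan is to write down an explicit partition of the set of nonempty proper prefixes of $X$ and to recognise each block as an $S$-maximal suffix code by means of the parse-counting function $d_X$. Let $H$ denote the set of \emph{all} proper prefixes of words of $X$, so that $\varepsilon\in H$ and the set to be decomposed is $H\setminus\{\varepsilon\}$, and recall that $P$ is the set of words having no prefix in $X$. First I would establish two identities. The counting identity states that for every word $w$ one has $d_X(w)=\Card\{\,p : p\text{ is a suffix of }w\text{ and }p\in P\,\}$: indeed, once a suffix $p\in P$ of $w$ is fixed, the prefix $w'$ determined by $w=w'p$ has a unique factorisation $w'=qx$ with $q\in Q$ and $x\in X^{*}$ (repeatedly strip from the right the unique suffix lying in $X$, uniqueness coming from $X$ being a suffix code, until a word of $Q$ is reached), so the parses of $w$ correspond bijectively to the suffixes of $w$ lying in $P$. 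The reduction identity states that $P\cap S=H$: since $S$ is recurrent, $X$ is an $S$-maximal prefix code, hence every $w\in S$ is prefix-comparable to a word of $X$, and if moreover $w\in P$ this forces $w$ to be a proper prefix of a word of $X$, while conversely any proper prefix of a word of $X$ has no prefix in $X$. Combining the two, for every $w\in S$ the number $d_X(w)$ equals the number of suffixes of $w$ that lie in $H$.

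Next I would set $Z_i=\{\,p\in H : d_X(p)=i+1\,\}$ for $1\le i\le n-1$. Since $\varepsilon$ is the only suffix of $\varepsilon$ lying in $H$, $d_X(\varepsilon)=1$; and for $p\in H\setminus\{\varepsilon\}$ both $\varepsilon$ and $p$ are suffixes of $p$ lying in $H$, so $2\le d_X(p)\le d_X(S)=n$. Hence $H\setminus\{\varepsilon\}=Z_1\sqcup\cdots\sqcup Z_{n-1}$. Each $Z_i$ is a suffix code: if $u$ were a proper suffix of $p$ with $u,p\in H$, then every suffix of $u$ lying in $H$ is a suffix of $p$ lying in $H$, whereas $p$ itself is not a suffix of $u$, so $d_X(u)<d_X(p)$ and $u,p$ cannot both belong to $Z_i$; moreover $\varepsilon\notin Z_i$.

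It then remains to prove that each $Z_i$ is $S$-maximal. I would pick $w\in S$ with $|w|\ge\max_{x\in X}|x|$ (such $w$ exist as $S$ is infinite); then $w$ is not an internal factor of a word of $X$, so $d_X(w)=n$ and $w$ has exactly $n$ suffixes in $H$, say $\varepsilon=h_1\prec h_2\prec\cdots\prec h_n$. For each $j$ the suffixes of $h_j$ lying in $H$ are exactly $h_1,\dots,h_j$ (any suffix of $h_j$ is a suffix of $w$ of length at most $|h_j|$), so $d_X(h_j)=j$ and thus $h_j\in Z_{j-1}$ for $2\le j\le n$; in particular $w$ has a suffix in $Z_i$ for every $i$. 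An arbitrary $w\in S$ is a suffix of some word $w'$ of $S$ of the above length (by recurrence), and a suffix $z\in Z_i$ of $w'$ is then suffix-comparable to $w$, so $Z_i\cup\{w\}$ is a suffix code only when $w\in Z_i$; this is precisely the $S$-maximality of $Z_i$.

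The technical core is the counting identity $d_X(w)=\Card\{p\in P : p\text{ is a suffix of }w\}$ together with the reduction $P\cap S=H$; granting these, the rest is bookkeeping with nested families of suffixes. The only other delicate point is the final step, where recurrence is used to realise an arbitrary word of $S$ as a suffix of an arbitrarily long one and thereby transfer the maximality statement from long words to all of $S$.
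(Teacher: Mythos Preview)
The paper does not prove this theorem at all: it is quoted verbatim as \cite[Theorem~4.3.7]{BerstelDeFelicePerrinReutenauerRindone2012} and used as a black box. Your argument is correct and is in fact the standard proof from that reference: one partitions the nonempty proper prefixes of $X$ according to the value of the parse function $d_X$, using that $d_X(w)$ counts the suffixes of $w$ lying in $P$ and that $P\cap S$ coincides with the set of proper prefixes of $X$; each level set is then shown to be an $S$-maximal suffix code exactly as you do. So your proposal reproduces the cited proof rather than offering an alternative, which is entirely appropriate here since the present paper supplies none.
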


\begin{example}
Let $S$ and $X$ be as in Example \ref{ex:maxbifix}.
The set of nonempty proper prefixes of $X$ is the $S$-maximal suffix code represented on the left of Figure~\ref{fig:maxsuf}.
\end{example}

The following statement is closely related with a similar statement concerning the average length of a bifix code, but which requires an invariant probability distribution (see~\cite[Corollary 4.3.8]{BerstelDeFelicePerrinReutenauerRindone2012}).

\begin{proposition}
\label{pro:probap}
Let $S$ be a recurrent neutral set containing $A$, and let $X$ be a finite $S$-maximal bifix code of $S$-degree $n$.
The set $P$ of proper prefixes of $X$ satisfies $\rho_S(P) = n(\Card(A)-\chi(S))$.
\end{proposition}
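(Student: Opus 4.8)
The plan is to derive the statement directly from Theorem~\ref{theo:union} and Proposition~\ref{pro:maxsuf}, the only subtle point being the bookkeeping for the empty word. Write $P = \{\varepsilon\}\cup P'$, where $P'$ is the set of nonempty proper prefixes of $X$; since $X$ is finite and consists of nonempty words, $P'$ is finite and $\varepsilon$ is a proper prefix of every element of $X$, so $\rho_S(P) = \rho_S(\varepsilon) + \rho_S(P')$ by additivity of $\rho_S$ over disjoint unions.

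First I would record the value of $\rho_S$ at $\varepsilon$. Because $S$ contains $A$ we have $r_S(\varepsilon) = \Card(A)$, and by definition $\chi(S) = 1 - m_S(\varepsilon)$; hence, as already observed in Section~\ref{sec:graph}, $\rho_S(\varepsilon) = e_S(\varepsilon) - \ell_S(\varepsilon) = m_S(\varepsilon) + r_S(\varepsilon) - 1 = \Card(A) - \chi(S)$.

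Next I would apply Theorem~\ref{theo:union}: since $S$ is recurrent and $X$ is a finite $S$-maximal bifix code of $S$-degree $n$, the set $P'$ splits as a disjoint union $P' = X_1 \sqcup \cdots \sqcup X_{n-1}$ of $n-1$ $S$-maximal suffix codes, each finite as a subset of the finite set $P'$. As $S$ is neutral and contains $A$, Proposition~\ref{pro:maxsuf} applies to every $X_i$ and yields $\rho_S(X_i) = \Card(A) - \chi(S)$, whence $\rho_S(P') = \sum_{i=1}^{n-1}\rho_S(X_i) = (n-1)\bigl(\Card(A) - \chi(S)\bigr)$.

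Combining the two contributions gives $\rho_S(P) = \rho_S(\varepsilon) + \rho_S(P') = \bigl(\Card(A) - \chi(S)\bigr) + (n-1)\bigl(\Card(A) - \chi(S)\bigr) = n\bigl(\Card(A) - \chi(S)\bigr)$, as claimed. I do not expect a genuine obstacle here; the main thing to get right is that the empty word supplies the one extra copy of $\Card(A)-\chi(S)$ that turns the count $n-1$ coming from Theorem~\ref{theo:union} into $n$, and to check that the finiteness hypotheses required by Theorem~\ref{theo:union} and Proposition~\ref{pro:maxsuf} are all in force.
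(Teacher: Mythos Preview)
Your proof is correct and follows essentially the same approach as the paper: decompose $P$ into $\{\varepsilon\}$ and the nonempty proper prefixes, apply Theorem~\ref{theo:union} to write the latter as a disjoint union of $n-1$ $S$-maximal suffix codes, use Proposition~\ref{pro:maxsuf} on each piece, and add back the contribution $\rho_S(\varepsilon)=\Card(A)-\chi(S)$. The paper's proof is terser but otherwise identical.
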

\begin{proof}
By Theorem \ref{theo:union}, we have $P\setminus \{ \varepsilon \} = \cup_{i=1}^{n-1}Y_i$, where the $Y_i$ are $S$-maximal suffix codes.
By Proposition~\ref{pro:maxsuf}, we have $\rho(Y_i) = \Card(A) - \chi(S)$ and thus $\rho(P) = \rho(\varepsilon) + (n-1)(\Card(A) - \chi(S)) = n(\Card(A)-\chi(S))$.
\end{proof}

\section{Cardinality Theorem for bifix codes}
\label{sec:cardinality}
The following theorem is a generalization of~\cite[Theorem 3.6]{BertheDeFeliceDolceLeroyPerrinReutenauerRindone2013b} where it is proved for a neutral set of characteristic $1$.
We consider a recurrent set $S$ containing the alphabet $A$, and we implicitly assume that all words of $S$ are on the alphabet $A$.

\begin{theorem}
\label{theo:cardinality}
Let $S$ be a neutral recurrent set  containing the alphabet $A$.
For any finite $S$-maximal bifix code $X$ of $S$-degree $n$, one has
$$
\Card(X) = n(\Card(A)-\chi(S)) + \chi(S).
$$
\end{theorem}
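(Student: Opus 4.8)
The plan is to relate the cardinality of $X$ to the $\rho_S$-measure of its set of proper prefixes, using the structural description of bifix codes already available. Recall that every word $w$ of $S$ is either a proper prefix of $X$ or has a prefix in $X$ (since $X$ is an $S$-maximal prefix code, using that for recurrent $S$ maximal bifix implies maximal prefix). More usefully, there is a bijective correspondence: if $P$ denotes the set of proper prefixes of $X$, then for each $p\in P$ the set $R_S(p)$ of letters $a$ with $pa\in S$ splits as those $a$ for which $pa$ is still a proper prefix of $X$ and those for which $pa\in X$. This is because $X$ is a prefix code, so no element of $X$ is a proper prefix of another element of $X$, and $S$-maximality guarantees every $p\in P$ is left-extendable in $S$ along the prefix tree until one reaches an element of $X$.

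First I would exploit the ``flow'' identity carried by $\rho_S$. For a neutral word $p$ we have $\rho_S(p)=r_S(p)-1$, i.e. $r_S(p)=\rho_S(p)+1$. Counting edges of the prefix tree of $X$ two ways: each node $p\in P$ has $r_S(p)$ children (in $S$), and each such child is either again in $P$ or is a leaf, i.e. an element of $X$. Hence
\begin{equation}
\sum_{p\in P} r_S(p) = \Card(P\setminus\{\varepsilon\}) + \Card(X),
\label{eq:treecount}
\end{equation}
since every node of $P$ other than the root is the child of exactly one node, and every element of $X$ is the child of exactly one node of $P$. Now substitute $r_S(p)=\rho_S(p)+1$ for $p\neq\varepsilon$ and $r_S(\varepsilon)=\Card(A)$ (as $A\subseteq S$), so that $\sum_{p\in P} r_S(p) = \Card(A) + \sum_{p\in P\setminus\{\varepsilon\}}(\rho_S(p)+1) = \Card(A) + \rho_S(P) - \rho_S(\varepsilon) + \Card(P\setminus\{\varepsilon\})$. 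Comparing with \eqref{eq:treecount} and cancelling $\Card(P\setminus\{\varepsilon\})$ gives $\Card(X) = \Card(A) + \rho_S(P) - \rho_S(\varepsilon)$. Since $\rho_S(\varepsilon) = \Card(A)-\chi(S)$, this reads $\Card(X) = \chi(S) + \rho_S(P)$.

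Finally I would invoke Proposition~\ref{pro:probap}, which gives $\rho_S(P) = n(\Card(A)-\chi(S))$, whence $\Card(X) = n(\Card(A)-\chi(S)) + \chi(S)$, as desired. The neutrality of $S$ enters twice: once to replace $r_S(p)$ by $\rho_S(p)+1$ at interior prefixes, and once (through Proposition~\ref{pro:probap}, hence through Proposition~\ref{pro:maxsuf} and Theorem~\ref{theo:union}) to evaluate $\rho_S(P)$. Recurrence of $S$ is needed so that finite $S$-maximal bifix codes have finite $S$-degree and the prefix-tree description (every leaf in $X$, all internal nodes in $P$, the tree finite) is valid.

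The main obstacle I anticipate is making the double-counting \eqref{eq:treecount} fully rigorous: one must check that $P$ is exactly the node set of a finite tree whose leaves are precisely the elements of $X$, i.e. that no $p\in P$ is a ``dead end'' in $S$ (every proper prefix of $X$ extends in $S$ to a longer proper prefix or to an element of $X$) and that no element of $X$ is an interior node. The first follows from $S$-maximality of $X$ as a prefix code together with the fact that $S$ is right-extendable (being recurrent, hence biextendable, hence in particular every word is right-extendable); the second is immediate from $X$ being a prefix code. Once this combinatorial picture is secured, the rest is the short computation above.
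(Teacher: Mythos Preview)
Your proposal is correct and follows essentially the same approach as the paper: both reduce to $\Card(X)=\rho_S(P)+\chi(S)$ via the tree identity $\Card(X)=1+\sum_{p\in P}(r_S(p)-1)$ and then invoke Proposition~\ref{pro:probap}. The only difference is cosmetic: the paper cites the tree identity as ``a well-known property of trees'', whereas you derive it explicitly via the double-counting \eqref{eq:treecount}, and you split off the $\varepsilon$ term using $r_S(\varepsilon)=\Card(A)$ rather than $\rho_S(\varepsilon)=m_S(\varepsilon)+r_S(\varepsilon)-1$.
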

Note that we recover, as a particular case of Theorem~\ref{theo:cardinality} applied to the set $X$ of words of length $n$ in $S$, the fact that for a set $S$ satisfying the hypotheses of the theorem, the factor complexity is $p_0 = 1$ and $p_n = n(\Card(A)-\chi(S))+\chi(S)$.
Note that Theorem~\ref{theo:cardinality} has a converse (see~\cite{BertheDelecroixDolceLeroyPerrinReutenauerRindone2015}).

\begin{proof}[of Theorem~\ref{theo:cardinality}]
Since $X$ is a finite $S$-maximal bifix code, it is an $S$-maximal prefix code (see Section~\ref{sec:bifix}).
By a well-known property of trees, this implies that $\Card(X) = 1 + \sum_{p\in P}(r(p)-1)$ where $P$ is the set of proper prefixes of $X$.
Since $\rho(p) = r(p)-1$ for $p$ non empty and $\rho(\varepsilon) = m(\varepsilon) + r(\varepsilon)-1$, we have
\begin{eqnarray*}
\Card(X)	& = &	1+\sum_{p\in P}(r(p)-1) = 1+\sum_{p\in P}\rho(p)-m(\varepsilon) \\
		& = &	\rho(P)+\chi(S)= n(\Card(A)-\chi(S))+\chi(S)
\end{eqnarray*}
since $\rho(P)=n(\Card(A)-\chi(S))$ by Proposition~\ref{pro:probap}.
\end{proof}

\begin{example}
Let $S$ be the neutral set of Example~\ref{ex:cassaigne} and let $X$ be the $S$-maximal bifix code of Example~\ref{ex:maxbifix}.
We have $\Card(X) = 2 (4-2) + 2 = 6$ according to Theorem~\ref{theo:cardinality}.
\end{example}

\section{Cardinality Theorem for return words}
\label{sec:return}

Let $S$ be a factorial set of words.
For a set $X \subset S$ of nonempty words, a \emph{complete first return word} to $X$ is a word of $S$ which has a proper prefix in $X$, a proper suffix in $X$ and no internal factor in $X$.
We denote by $\CR_S(X)$ the set of complete first return words to $X$.
The set $\CR_S(X)$ is a bifix code.
If $S$ is uniformly recurrent, $\CR_S(X)$ is finite for any finite set $X$.
For $x \in S$, we denote $\CR_S(x)$ instead of $\CR_S(\{ x \})$.

\begin{theorem}
\label{theo:completereturn}
Let $S$ be a  uniformly recurrent neutral set containing the alphabet $A$.
For any bifix code $X \subset S$, we have
$$
\Card(\CR_S(X)) = \Card(X)+\Card(A)-\chi(S).
$$
\end{theorem}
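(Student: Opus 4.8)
The plan is to reduce the statement to a counting identity on a prefix forest and then to recognize the ``internal nodes'' of that forest as an $S$-maximal suffix code, to which Proposition~\ref{pro:maxsuf} can be applied; we may assume $X$ finite and nonempty, so that $\CR_S(X)$ is finite. First I would split $\CR_S(X)$ according to its prefix in $X$: since $X$ is a prefix code, every $w\in\CR_S(X)$ has a unique, and proper, prefix $x\in X$, so $w=xr$ with $r\neq\varepsilon$. For $x\in X$ set $R(x)=\{r:xr\in\CR_S(X)\}$ and let $T_x$ be the set of words $r$ such that $xr\in S$ and no proper prefix of $r$ lies in $R(x)$. Ordered by the prefix relation, $T_x$ is a finite tree (uniform recurrence bounds the return times) whose leaves are exactly the elements of $R(x)$: a word $r\in T_x\setminus R(x)$ has a successor because $xr$ is right-extendable in $S$, and every such successor stays in $T_x$; moreover an internal node $r$, corresponding to the word $w=xr$, has exactly $r_S(w)$ successors. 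Using the elementary tree identity $(\text{number of leaves})=1+\sum_v(d_v-1)$, where the sum is over internal nodes $v$ and $d_v$ is the number of successors of $v$, and summing over $x\in X$ (the families involved are pairwise disjoint because $X$ is a prefix code), one gets
\[
\Card(\CR_S(X))=\Card(X)+\sum_{w\in W}\bigl(r_S(w)-1\bigr),
\]
where $W$ is the set of words having a prefix in $X$ that are a proper prefix of some element of $\CR_S(X)$. As $S$ is neutral and the words of $W$ are nonempty, $r_S(w)-1=\rho_S(w)$, whence $\Card(\CR_S(X))=\Card(X)+\rho_S(W)$.

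The core of the argument is then the claim that $W$ is a finite $S$-maximal suffix code; granting it, Proposition~\ref{pro:maxsuf} gives $\rho_S(W)=\Card(A)-\chi(S)$ and the theorem follows (the case $\rho_S(\varepsilon)=0$ is covered too, since that proposition then yields $\rho_S(W)=0$, and the identity above still holds). That $W$ is a suffix code is a direct consequence of the definition of complete return words: if some $w\in W$ were a proper suffix of some $w'\in W$, then, writing $w=xr$ with $x\in X$ and choosing $z\in\CR_S(X)$ of which $w'$ is a proper prefix, the word $x$ would occur in $z$ strictly between its two endpoints, so $z$ would have an internal factor in $X$ --- a contradiction. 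For $S$-maximality I would use uniform recurrence to fix $N$ such that every word of $S$ of length $\geq N$ has a factor in $X$; given such a word $u$, take the rightmost occurrence in $u$ of a word $x\in X$, say starting at position $i$, and let $s$ be the suffix of $u$ starting at position $i$. Then no word of $X$ occurs in $s$ except the prefix $x$ (such an occurrence would start to the right of position $i$ in $u$), which forces $s$ to be an internal node of $T_x$: it is not a complete return word, and no proper prefix of it is one (in either case a proper suffix in $X$ would occur past position $i$ in $u$). Hence $s\in W$, so $u$ has a suffix in $W$; combined with the suffix-code property and finiteness, the standard argument (extending short words on the left using recurrence) shows that $W$ is not contained in any strictly larger suffix code of $S$.

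The step I expect to be most delicate is the handling of the set $W$: pinning down its description correctly --- in particular the fact that every word of $X$ itself lies in $W$, since a prefix code contains no complete return word to itself, so that the root $\varepsilon$ of each $T_x$ is an internal node --- and then verifying carefully, in the maximality part, that the singled-out suffix $s$ is genuinely an internal node of $T_x$, which is exactly where the ``no internal factor in $X$'' clause of the definition of $\CR_S(X)$ and the maximality of the position $i$ are used together.
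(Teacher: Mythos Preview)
Your proof is correct and follows essentially the same route as the paper's: your set $W$ coincides with the paper's set $Y$ (the proper prefixes of $\CR_S(X)$ that are not proper prefixes of $X$), and both arguments reduce to showing that this set is a finite $S$-maximal suffix code and then applying Proposition~\ref{pro:maxsuf}. The only difference is cosmetic: you organize the counting as a forest of trees $T_x$ rooted at each $x\in X$, while the paper works directly with the single prefix tree of $\CR_S(X)$ and splits its internal nodes into $P'$ (proper prefixes of $X$) and $Y$.
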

\begin{proof}
Let $P$ be the set of proper prefixes of $\CR_S(X)$.
For $q \in P$, we denote $\alpha(q) = \Card\{a \in A\mid qa \in P \cup \CR_S(X)\}-1$ and $\alpha(P) = \sum_{q\in P}\alpha(p)$.

Since $\CR_S(X)$ is a finite nonempty prefix code, we have, by a well-known property of trees, $\Card(\CR_S(X)) = 1+\alpha(P)$.

Let $P'$ be the set of words in $P$ which are proper prefixes of $X$
and let $Y=P\setminus P'$.
Since $P'$ is the set of proper prefixes of $X$, we have $\alpha(P) = \Card(X)-1$.

Since $S$ is recurrent, any word of $S$ with a prefix in $X$ is comparable for the prefix order with a word of $\CR_S(X)$.
This implies that for any $q \in Y$ and any $b \in R_S(q)$, one has $qb \in P \cup \CR_S(X)$.
Consequently, we have $\alpha(q) = \rho_S(q)$ for any $q \in Y$.
Thus we have shown that
$$
\Card(\CR_S(X)) = 1+\alpha(P')+\rho(Y) = \Card(X)+\rho(Y).
$$
Let us show that $Y$ is an $S$-maximal suffix code.
This will imply our conclusion by Proposition~\ref{pro:maxsuf}.
Suppose that $q, uq \in Y$ with $u$ nonempty.
Since $q$ is in $Y$, it has a proper prefix in $X$.
But this implies that $uq$ has an internal factor in $X$, a contradiction.
Thus $Y$ is a suffix code.
Consider $w \in S$.
Since $S$ is recurrent, there is some $u$ and $x \in X$ such that $xuw \in S$.
Let $y$ be the shortest suffix of $xuw$ which has a proper prefix in $X$.
Then $y \in Y$.
This shows that $Y$ is an $S$-maximal suffix code.
\end{proof}

Let $S$ be a factorial set.
A \emph{right first return word} to $x$ in $S$ is a word $w$ such that $xw$ is a word of $S$ which ends with $x$ and has no internal factor equal to $x$ (thus $xw$ is a complete first return word to $x$).
We denote by $\RR_S(x)$ the set of right first return words to $x$ in $S$. Since $\CR_S(x)=x\RR_S(x)$,
the sets $\CR_S(x)$ and $\RR_S(x)$ have the same number of elements. Thus we have the
following consequence of Theorem~\ref{theo:completereturn}.

\begin{corollary}
Let $S$ be a uniformly recurrent neutral set containing $A$.
For any $x \in S$, the set $\RR_S(x)$ has $\Card(A) - \chi(S)+1$ elements.
\end{corollary}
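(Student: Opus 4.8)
The plan is to derive this corollary directly from Theorem~\ref{theo:completereturn} by specializing to the singleton bifix code $X=\{x\}$. First I would observe that a single nonempty word $x\in S$ is trivially a bifix code (it has no proper prefix or suffix among its own elements), so the hypotheses of Theorem~\ref{theo:completereturn} apply with $\Card(X)=1$. This immediately gives $\Card(\CR_S(x))=1+\Card(A)-\chi(S)$.

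Next I would connect $\CR_S(x)$ with $\RR_S(x)$. The key identity is $\CR_S(x)=x\RR_S(x)$: a word $w$ lies in $\RR_S(x)$ precisely when $xw\in S$, $xw$ ends with $x$, and $xw$ has no internal occurrence of $x$; this is exactly the condition that $xw$ be a complete first return word to $x$ (it has proper prefix $x$, proper suffix $x$, and no internal factor equal to $x$). The map $w\mapsto xw$ is a bijection from $\RR_S(x)$ onto $\CR_S(x)$, so the two sets have the same cardinality. This is already noted in the excerpt just before the corollary, so the step is routine.

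Combining the two observations yields $\Card(\RR_S(x))=\Card(\CR_S(x))=1+\Card(A)-\chi(S)=\Card(A)-\chi(S)+1$, which is the claimed formula. I would also remark, for completeness, that since $S$ is uniformly recurrent and $\{x\}$ is finite, $\CR_S(x)$ is indeed finite, so the cardinality statement is meaningful.

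There is essentially no obstacle here: the corollary is a one-line consequence of the theorem together with the elementary bijection $\CR_S(x)=x\RR_S(x)$. The only point requiring the tiniest care is checking that a singleton set of a nonempty word genuinely qualifies as a bifix code under the definition given in Section~\ref{sec:bifix}, which it does vacuously. Hence the proof is short and I expect it to be stated in a couple of sentences.
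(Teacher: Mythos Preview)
Your proposal is correct and follows exactly the approach indicated in the paper: specialize Theorem~\ref{theo:completereturn} to the singleton bifix code $X=\{x\}$ and use the bijection $\CR_S(x)=x\RR_S(x)$ to transfer the cardinality to $\RR_S(x)$. There is nothing to add.
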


\begin{example}
Consider again the neutral set $S$ of Example~\ref{ex:cassaigne}.
We have $\RR_S(a) = \{ bca,bcda,cad \}$.
\end{example}

\section{Bifix decoding}
\label{sec:decoding}
Let $S$ be a factorial set and let $X$ be a finite $S$-maximal bifix code.
A \emph{coding morphism} for $X$ is a morphism $f : B^*\rightarrow A^*$ which maps bijectively an alphabet $B$ onto $X$.
The set $f^{-1}(S)$ is called a \emph{maximal bifix decoding} of $S$.

\begin{theorem}
\label{theo:decoding}
Any maximal bifix decoding of a recurrent neutral set is a neutral set
with the same characteristic.
\end{theorem}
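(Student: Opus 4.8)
The plan is to show that for a recurrent neutral set $S$ and a coding morphism $f:B^*\to A^*$ for a finite $S$-maximal bifix code $X$, the set $T=f^{-1}(S)$ satisfies $m_T(w)=0$ for every nonempty $w\in T$ and $1-m_T(\varepsilon)=\chi(S)$. The key device is to relate the extension graph $E_T(w)$, for $w\in T$, to extension information in $S$ about the word $x=f(w)\in S$. Concretely, for a nonempty word $w\in T$ with image $x=f(w)$, a letter $\beta\in B$ lies in $L_T(w)$ iff $f(\beta)w\in T$, i.e. iff $f(\beta)x\in S$; but because $X$ is a bifix code and $f(\beta)$ ends with a well-defined suffix situation, left-extending $x$ in $S$ by reading "one more block $f(\beta)$ to the left" should be controlled by the left extensions of $x$ in $S$ together with the bifix (prefix/suffix) structure of $X$. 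I would make this precise by the standard correspondence: reading a block to the left of $x$ in $S$ amounts to choosing a path in $S$ from some letter $a\in L_S(x)$ backwards until a word of $X$ is completed, and because $X$ is $S$-maximal prefix (hence, via recurrence, also a maximal prefix code) every such backward extension eventually and uniquely completes a word of $X$. So $L_T(w)$ is in bijection with $L_S(x)$, and symmetrically $R_T(w)$ with $R_S(x)$; the content is that these bijections are compatible with the edge sets, giving a graph isomorphism (or at least an Euler-characteristic–preserving map) $E_T(w)\cong E_S(x)$.

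Granting that, the theorem follows immediately: for nonempty $w$, $x=f(w)$ is nonempty and biextendable in $S$, so $m_S(x)=0$, hence $m_T(w)=0$; and for $w=\varepsilon$, $f(\varepsilon)=\varepsilon$, so $E_T(\varepsilon)\cong E_S(\varepsilon)$ and $\chi(T)=1-m_T(\varepsilon)=1-m_S(\varepsilon)=\chi(S)$. Thus I would first carefully state and prove the lemma that $E_T(w)$ is isomorphic (as an undirected bipartite graph) to $E_S(f(w))$ for every $w\in T$, then deduce the theorem in two lines. One should also check the cheap prerequisites: that $T$ is factorial (clear, since $f$ is a morphism and $S$ is factorial), that $T$ contains its alphabet $B$ (because each $f(\beta)\in X\subseteq S$, and by $S$-maximality/recurrence each $f(\beta)$ is a factor of a longer word that decodes, so $\beta\in T$), and that $T$ is recurrent (this is known and not needed for neutrality itself, though it is implicitly used to invoke the "maximal prefix $=$ maximal bifix" equivalence inside $S$).

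The main obstacle is proving the graph isomorphism $E_T(w)\cong E_S(f(w))$ rigorously, i.e. setting up the bijection on vertices and checking it respects edges. The delicate point is the following: given $x=f(w)$ and a letter $a\in L_S(x)$, one wants to "decode leftward" to find the unique $\beta\in B$ such that $f(\beta)$ is the block ending at $a$ immediately to the left of $x$ in $S$. For this to be well-defined one needs that any word $u$ of $S$ with $ux\in S$ has a suffix (the part abutting $x$) that is comparable to a word of $X$, and this is exactly where $S$-maximality of the bifix code $X$ and recurrence of $S$ enter — by the cited results from Section~\ref{sec:bifix}, $X$ is an $S$-maximal prefix code and reversal symmetry makes $\widetilde X$ an $\widetilde S$-maximal prefix code, so leftward decoding is total and unique. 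The compatibility with edges amounts to the observation that $f(\beta)\,x\,f(\gamma)\in S$ iff the decoded block structure $\ldots f(\beta)\,(\text{blocks of }w)\,f(\gamma)\ldots$ lies in $S$, iff $\beta w\gamma\in T$; unwinding this carefully (and being careful that $f(\beta)$ and $f(\gamma)$ do not "overlap" the blocks constituting $w$, which is guaranteed because $X$ is a prefix and a suffix code) is the technical heart. Once the bijections $L_T(w)\to L_S(x)$, $R_T(w)\to R_S(x)$ and the edge correspondence are all in place, the Euler characteristic is preserved and the proof is complete.
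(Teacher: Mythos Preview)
Your central lemma---that $E_T(w)$ is isomorphic to $E_S(f(w))$ as a bipartite graph---is false, and the whole argument rests on it. Already for $w=\varepsilon$ one has $L_T(\varepsilon)=B$ and $L_S(\varepsilon)=A$, so the claimed bijection would force $\Card(X)=\Card(A)$; but by Theorem~\ref{theo:cardinality} a finite $S$-maximal bifix code of $S$-degree $n\ge 2$ has strictly more than $\Card(A)$ elements whenever $\chi(S)<\Card(A)$. The same failure occurs for nonempty $w$: take $S$ Sturmian on $\{a,b\}$ (say the Fibonacci set), $X=S\cap A^2$, and $x=f(w)$ the left-special word of its length; then $\ell_S(x)=2$, whereas $\ell_S^X(x)=\Card\{y\in X:yx\in S\}=3$, so $\ell_T(w)=3\ne 2$. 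The flaw in the ``leftward decoding'' heuristic is that, given $a\in L_S(x)$, the word $ax$ may extend leftward in $S$ in several ways, yielding several distinct codewords $f(\beta)$ all ending in the letter $a$; there is no unique $\beta$ attached to $a$.

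What \emph{does} hold is the purely numerical statement $m_S^{X,X}(f(w))=m_S(f(w))$ (equivalently $m_T(w)=m_S(f(w))$), even though the underlying graphs have different sizes. The paper proves this via Proposition~\ref{pro:strong}: for any finite $S$-maximal suffix code $X$ and $S$-maximal prefix code $Y$, one has $m_S^{X,Y}(w)=m_S(w)$ for all $w\in S$. The argument is an induction on the total length of the words in $X$ and $Y$: if $Y$ contains a word of length $\ge 2$, pick a longest proper prefix $p$ of $Y$, replace $pA\cap Y$ by $p$ to obtain a smaller $S$-maximal prefix code $Y'$, and compute $m^{X,Y}(w)-m^{X,Y'}(w)=m^{X,A}(wp)$, which vanishes by the induction hypothesis since $wp$ is nonempty and hence neutral. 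This inductive ``peeling off one letter at a time'' is where neutrality of $S$ is actually used, and it is the idea missing from your proposal.
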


Let $S$ be a factorial set.
For two sets of words $X,Y$ and a word $w \in S$, we denote $L_S^X(w) = \{x \in X \mid xw \in S \}, \ R_S^Y(w) = \{y \in Y \mid wy \in S\}, \ E_S^{X,Y}(w) = \{ (x,y) \in X \times Y \mid xwy \in S \}$,
and further
$$
e_S^{X,Y}(w) = \Card(E_S^{X,Y}(w)),\ \ell_S^X(w) = \Card(L_S^X(w)),\ r_S^Y(w) = \Card(R_S^Y(w)).
$$
Finally, for a word $w$, we denote $m_S^{X,Y}(w) = e_S^{X,Y}(w) - \ell_S^X(w) - r_S^Y(w)+1$.
Note that $E_S^{A,A}(w) = E_S(w)$, $m_S^{A,A}(w) = m_S(w)$, and so on.

\begin{proposition}
\label{pro:strong}
Let $S$ be a neutral set, let $X$ be a finite $S$-maximal suffix code and let $Y$ be a finite $S$-maximal prefix code.
Then $m_S^{X,Y}(w) =m_S(w)$ for every $w\in S$.
\end{proposition}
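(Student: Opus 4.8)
The plan is to prove Proposition~\ref{pro:strong} by reducing the general case to two one-sided special cases and then iterating, using that both $X$ and $Y$ are maximal prefix/suffix codes so that extensions by letters can be ``absorbed'' into extensions by codewords without changing the relevant Euler characteristics. Concretely, I would first treat the asymmetric situation where $Y = A$ and $X$ is an arbitrary finite $S$-maximal suffix code, proving $m_S^{X,A}(w) = m_S(w)$ for all $w \in S$; the symmetric statement $m_S^{A,Y}(w) = m_S(w)$ follows by left-right duality, and then composing the two gives the full claim since $m_S^{X,Y}(w) = m_S^{X,A}(w') $-type bookkeeping can be chained (more carefully: apply the one-sided result on the left to get $m_S^{X,A} = m_S^{A,A}$, then apply the one-sided result on the right, now working in the set $S$ but with left-context the code $X$, observing that neutrality is exactly the statement being transported).

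For the one-sided case with $Y=A$, the key device is to build $E_S^{X,A}(w)$ out of the graphs $E_S(xw)$ for proper prefixes (or suffixes) involved in parsing codewords of $X$ ending just before $w$. Since $X$ is an $S$-maximal suffix code, every left-extension of $w$ by a letter eventually ``closes off'' into a codeword of $X$, and the set of words $u$ such that $uw \in S$ and $u$ has no proper suffix in $X$ forms a finite tree $T$ (rooted at $\varepsilon$, leaves labelled by $L_S^X(w)$, internal nodes the proper prefixes-read-right-to-left, i.e. the suffixes $u$ of potential left-contexts with no suffix in $X$). The neutrality hypothesis says each nonempty vertex $u$ of this tree contributes $m_S(uw) = 0$, i.e. the local extension balance is preserved at every step, while the root contributes $m_S(w)$. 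A telescoping/summation argument over this tree — summing the identity $m_S(uw)=0$ with appropriate signs, using that at each internal node the ``fan-out'' by letters $r_S(uw)$ matches the number of children — collapses the whole tree to the single identity $e_S^{X,A}(w) - \ell_S^X(w) - r_S^A(w) + 1 = e_S(w) - \ell_S(w) - r_S(w)+1$, which is exactly $m_S^{X,A}(w) = m_S(w)$. The maximality of $X$ is what guarantees the tree has no ``dangling'' leaves (every branch terminates in a codeword), and factoriality guarantees it is finite for finite $X$ (and in fact one only needs $X$ finite here, together with $S$-maximality; recurrence is not needed for this proposition, only for the theorems that invoke it).

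The main obstacle I anticipate is setting up the tree/summation bookkeeping cleanly so that the cancellation is transparent: one must carefully identify which intermediate words $uw$ are neutral (all nonempty ones, by hypothesis) versus the boundary term at $w$ itself, and make sure the edge-set $E_S^{X,A}(w)$ is genuinely in bijection with a disjoint union of the edge-sets $E_S(uw)$ restricted appropriately — concretely, an edge $(x,b) \in E_S^{X,A}(w)$ with $x = a_1\cdots a_m \in X$ corresponds to the edge $(a_1, b)$ read at the node $a_2\cdots a_m w$, and one needs that this correspondence partitions edges correctly across the tree. A convenient way to avoid re-deriving combinatorial identities is to reuse Proposition~\ref{pro:cant}-style counting, or simply to induct on $\max\{|x| : x \in X\}$: if all codewords of $X$ have length $1$ then $X = A$ (by $S$-maximality and $S \supseteq A$) and there is nothing to prove; otherwise replace a maximal-length codeword by ``shortening'' it one letter, which yields a new $S$-maximal suffix code $X'$ with smaller maximal length, and check that the single local neutrality identity $m_S(\,\cdot\,) = 0$ at the relevant word bridges $m_S^{X,A}(w)$ and $m_S^{X',A}(w)$. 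This inductive formulation is, I expect, the shortest route, and the only delicate point is verifying that the one-step shortening does produce a suffix code that is still $S$-maximal, which follows from the definition of $S$-maximality together with biextendability of $S$.
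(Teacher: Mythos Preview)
Your inductive route is essentially the paper's proof: the paper inducts on $\sum_{x\in X}|x|+\sum_{y\in Y}|y|$, at each step picking a longest proper prefix $p$ of $Y$ (or symmetrically for $X$), replacing $Y$ by $Y'=(Y\setminus pA)\cup\{p\}$, and using the one-line identity $m^{X,Y}(w)-m^{X,Y'}(w)=m^{X,A}(wp)$, which vanishes by the induction hypothesis applied to the pair $(X,A)$ at the nonempty word $wp$; your two-stage organisation (first $Y=A$, then general $Y$ with $X$ fixed) is just a harmless reordering of this same induction. One small correction to your wording: ``shorten a maximal-length codeword by one letter'' must be read as replacing \emph{all} of $As\cap X$ by the single word $s$, where $s$ is a longest proper suffix of a word of $X$ --- removing only one codeword $as$ can leave another $bs\in X$ with $s$ as a proper suffix, destroying the suffix-code property.
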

\begin{proof}
We may assume that $S$ contains the alphabet $A$.
We use an induction on the sum of the lengths of the words in $X$ and in $Y$. 

If $X,Y$ contain only words of length $1$, since $X$ (resp. $Y$) is an $S$-maximal suffix (resp. prefix) code, we have $X = Y = A$ and there is nothing to prove.

Assume next that one of them, say $Y$, contains words of length at least $2$.
Let $p$ be a nonempty proper prefix of $Y$ of maximal length.
Set $Y' = (Y\setminus pA) \cup p$.
If $wp \notin S$, then $m^{X,Y}(w) = m^{X,Y'}(w)$ and the conclusion follows by induction hypothesis.
Thus we may assume that $wp \in S$.
Then
\begin{displaymath}
m^{X,Y}(w)-m^{X,Y'}(w)= e^{X,A}(wp)- \ell^{X}(wp)-r^{A}(wp)+1=m^{X,A}(wp).
\end{displaymath}
By induction hypothesis, we have $m^{X,Y'}(w) = m(w)$
and $ m^{X,A}(wp)=0$, whence the conclusion.
\end{proof}

\begin{proof}[of Theorem~\ref{theo:decoding}]
Let $S$ be a recurrent neutral set and let $f : B^*\rightarrow A^*$ be a coding morphism for a finite $S$-maximal bifix code $X$.
Set $U=f^{-1}(S)$.
Let $v \in U \setminus \{ \varepsilon \}$ and let $w = f(v)$.
Then $m_U(v) = m_S^{X,X}(w)$.
Since $S$ is recurrent, $X$ is an $S$-maximal suffix code and prefix code.
Thus, by Proposition~\ref{pro:strong}, $m_U(v) = m_S(w)$,
which implies our conclusion.
\end{proof}

The following example shows that the maximal decoding of a uniformly recurrent neutral set need not be recurrent.

\begin{example}
Let $S$ be the set of factors of the infinite word $(ab)^\omega$.
The set $X = \{ ab, ba \}$ is a bifix code of $S$-degree $2$.
Let $f : u \mapsto ab$, $v \mapsto ba$.
The set $f^{-1}(S)$ is the set of factors of $u^\omega \cup v^\omega$ and it is not recurrent.
\end{example}

\section{Neutral sets and interval exchanges}
\label{sec:iet}
Let $I = ]\ell,r[$ be a nonempty open interval of the real line and $A$ a finite ordered alphabet.
For two intervals $\Delta, \Gamma$, we denote $\Delta < \Gamma$ if $x < y$ for any $x \in \Delta$ and $y \in \Gamma$.
A partition $(I_a)_{a \in A}$ of $I$ (minus $\Card(A)-1$ points) in open intervals is \emph{ordered} if $a < b$ implies $I_a < I_b$.

We consider now two total orders $<_1$ and $<_2$ on $A$ and two partitions $(I_a)_{a \in A}$ and $(J_a)_{a \in A}$ of $I$ in open intervals ordered respectively by $<_1$ and $<_2$ and such that for every $a$, $I_a$ and $J_a$ have the same length $\lambda_a$.
Let $\gamma_a = \sum_{b <_1 a} \lambda_b$ and $\delta_a = \sum_{b <_2 a} \lambda_a$.

An \emph{interval exchange transformation} (with flips) relative to $(I_a)_{a \in A}$ and $(J_a)_{a \in A}$ is a map $T : I \to I$ such that for every $a \in A$, its restriction to $I_a$ is either a translation or a symmetry from $I_a$ to $J_a$ (see, for example~\cite{bifixcodesandintervalexchanges} and \cite{NogueiraPiresTroubetzkoy2013} for interval exchanges with flips).

Observe that $\gamma_a$ is the left boundary of $I_a$ and that $\delta_a$ is the left boundary of $J_a$.
If $\Card(A) = s$, we say that $T$ is an $s$-interval exchange transformation.

\begin{example}
\label{ex:3iet}
Let $A = \{ a,b,c \}$.
Consider the rotation of angle $\alpha$ with $\alpha$ irrational as a $3$-transformation relative to the partition $(I_a)_{a \in A}$ of the interval $]0,1[$, where
$I_a = ]0, 1-2\alpha[$, $I_b = ]1-2\alpha, 1-\alpha[$ and $I_c = ]1-\alpha, 1[$, while $J_c = ]0, \alpha[$, $J_a = ]\alpha, 1-\alpha[$ and $J_b = ]1-\alpha, 1[$ (see Figure~\ref{fig:3iet}).
Then, for each letter $a$, the restriction to $I_a$ is a translation to $J_a$.
Note that one has $a <_1 b <_1 c$ and $c <_2 a <_2 b$.

\begin{figure}[hbt]
\centering
\gasset{Nh=2,Nw=2,ExtNL=y,NLdist=2,AHnb=0,ELside=r}
\begin{picture}(100,10)
\node(0h)(0,10){$0$}
\node(1-2alpha)(23.6,10){$1-2\alpha$}
\node(1-alpha)(61.8,10){$1-\alpha$}
\node(1h)(100,10){$1$}
\drawedge[linecolor=red,linewidth=1](0h,1-2alpha){$a$}
\drawedge[linecolor=blue,linewidth=1](1-2alpha,1-alpha){$b$}
\drawedge[linecolor=green,linewidth=1](1-alpha,1h){$c$}

\node(0b)(0,0){$0$}
\node(alpha)(38.2,0){$\alpha$}
\node(1-alphab)(61.8,0){$1-\alpha$}
\node(1b)(100,0){$1$}
\drawedge[linecolor=green,linewidth=1](0b,alpha){$c$}
\drawedge[linecolor=red,linewidth=1](alpha,1-alphab){$a$}
\drawedge[linecolor=blue,linewidth=1](1-alphab,1b){$b$}

\drawedge[AHnb=0,dash={0.2 0.5}0](1-alpha,1-alphab){}
\end{picture}
\caption{A $3$-interval exchange transformation.}
\label{fig:3iet}
\end{figure}
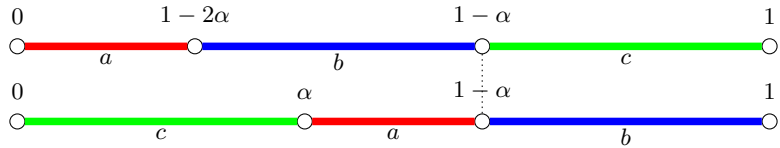
\end{example}
For a word $w = b_0 b_1 \cdots b_m$ let $I_w$ be the set
$$
I_w = I_{b_0} \cap T^{-1} \left( I_{b_1} \right) \cap \cdots \cap T^{-m} \left( I_{b_m} \right).
$$
Set $J_w = T^{|w|} \left( I_w \right)$.
We set by convention $I_\varepsilon = J_\varepsilon = ]\ell, r[$.
Note that each $I_w$ is an open interval and so is each $J_w$ (see~\cite{bifixcodesandintervalexchanges}).

Let $T$ be an interval exchange transformation on $I = ]\ell,r[$.
For a given $z \in I$, the \emph{natural coding} of $T$ relative to $z$ is the infinite word $\Sigma_T(z) = a_0 a_1 \cdots$ on the alphabet $A$ defined by $a_n = a$ if $T^n(z) \in I_a$.
We denote by $\mathcal{L}(T)$ the set of factors of the natural codings of $T$.
We also say that $\mathcal{L}(T)$ is the \emph{natural coding} of $T$.
Note that, for every $w \in \mathcal{L}(T)$, the interval $I_w$ is the set of points $z$ such that $\Sigma_T(z)$ starts with $w$, while the interval $J_w$ is the set of points $z$ such that $\Sigma_T\left( T^{-|w|}(z) \right)$ starts with $w$.
Moreover, it is easy to prove that a word $u$ is in $\mathcal{L}(T)$ if and only if $I_u \neq \emptyset$ (and thus if and only if $J_u \neq \emptyset$).

\begin{example}
Let $T$ be the interval exchange transformation of Example~\ref{ex:3iet}.
The first element of $\mathcal{L}(T)$ are represented in Figure~\ref{fig:factors} (right-special words are colored).

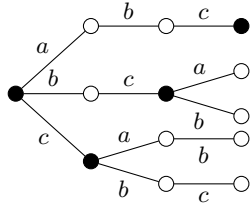
\begin{figure}[hbt]
\centering
\gasset{Nadjust=wh,AHnb=0}
\begin{picture}(30,21)
\node[fillcolor=black](e)(0,12){}
\node(a)(10,21){}
\node(b)(10,12){}
\node[fillcolor=black](c)(10,3){}
\node(ab)(20,21){}
\node[fillcolor=black](bc)(20,12){}
\node(ca)(20,6){}
\node(cb)(20,0){}
\node[fillcolor=black](abc)(30,21){}
\node(bca)(30,15){}
\node(bcb)(30,9){}
\node(cab)(30,6){}
\node(cbc)(30,0){}
\drawedge(e,a){$a$}
\drawedge(e,b){$b$}
\drawedge[ELside=r](e,c){$c$}
\drawedge(a,ab){$b$}
\drawedge(b,bc){$c$}
\drawedge(c,ca){$a$}
\drawedge[ELside=r](c,cb){$b$}
\drawedge(ab,abc){$c$}
\drawedge(bc,bca){$a$}
\drawedge[ELside=r](bc,bcb){$b$}
\drawedge[ELside=r](ca,cab){$b$}
\drawedge[ELside=r](cb,cbc){$c$}
\end{picture}
\caption{The words of length $\leq 3$ of $\mathcal{L}(T)$.}
\label{fig:factors}
\end{figure}
\end{example}

A \emph{connection} of an interval exchange transformation $T$ is a triple $(x,y,n)$ where $x$ is a singularity of $T^{-1}$, $y$ is a singularity of $T$, $n \geq 0$ and $T^n(x)=y$.
We also say that $(x,y,n)$ is a connection of length $n$ ending in $y$.
When $n=0$, we say that $x=y$ is a connection.

Interval exchange transformations without connections, also called \emph{regular} interval exchange transformations, are well studied (see, for example,~\cite{Keane1975} and~\cite{bifixcodesandintervalexchanges}).
The natural coding of a linear involutions without connection (see~\cite{BertheDelecroixDolcePerrinReutenauerRindone2014}) is essentially the coding of an interval exchange transformation with exactly one connection of length $0$ ending in the midpoint of the interval.

\begin{example}
Let $T$ be the transformation of Example~\ref{ex:3iet}.
The point $\gamma_c$ is a connection of length $0$.
This connection is represented with a dotted line in Figure~\ref{fig:3iet}.
\end{example}

Let $T$ be an interval exchange transformation with exactly $c$ connections all of length $0$.
Denote $\gamma_{k_0} = \ell$ and $\gamma_{k_1}, \ldots, \gamma_{k_c}$ the $c$ connections of $T$.
For every $0 \leq i < c$ the interval $]\gamma_{k_i}, \gamma_{k_{i+1}}[$ is called a \emph{component} of $I$.

\begin{example}
Consider again the transformation $T$ of Example~\ref{ex:3iet}.
The two components of $]0,1[$ are the two intervals $]0, 1-\alpha[$ and $]1-\alpha,1[$.
\end{example}

In the following result we generalize a result of~\cite{acyclicconnectedandtreesets} and show that the natural coding of an interval exchange is acyclic.

\begin{theorem}
\label{theo:quasitree}
Let $T$ be an interval exchange transformation with exactly $c$ connections, all of length $0$.
Then $\mathcal{L}(T)$ is neutral of characteristic $c$.
\end{theorem}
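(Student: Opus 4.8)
The strategy is to compute $m_S(w)$ for every $w\in\mathcal{L}(T)$ directly from the geometry of the interval $J_w=T^{|w|}(I_w)$. The starting point is that $T^{|w|}$ restricted to $I_w$ is a single affine map (a translation or a symmetry), so $J_w$ is a nonempty open interval and $T^{|w|}$ carries subintervals of $I_w$ to subintervals of $J_w$. Using $I_{aw}=T^{-1}(I_w\cap J_a)$, one sees that $aw\in\mathcal{L}(T)$ if and only if $I_w\cap J_a\neq\emptyset$; hence the images $T^{|w|}(I_w\cap J_a)$, for $a\in L_S(w)$, form a subdivision of $J_w$ into $\ell_S(w)$ cells, whose interior cut points are the images under $T^{|w|}$ of the interior singularities of $T^{-1}$ contained in $I_w$. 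Symmetrically, $wb\in\mathcal{L}(T)$ if and only if $I_b\cap J_w\neq\emptyset$, and the cells $I_b\cap J_w$, for $b\in R_S(w)$, subdivide $J_w$ into $r_S(w)$ cells whose interior cut points are the interior singularities of $T$ lying in $J_w$. Finally $J_{aw}=T^{|w|}(I_w\cap J_a)$, so $awb\in\mathcal{L}(T)$ exactly when the $a$-cell of the first subdivision meets the $b$-cell of the second; thus $e_S(w)$ is the number of pairs of cells, one from each subdivision, with nonempty intersection.

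The computation then rests on the elementary remark that if an open interval is cut into $p$ cells by a set $P_1$ of interior points and into $q$ cells by a set $P_2$, then the number of intersecting pairs of cells equals the number of cells of the common refinement, namely $p+q-1-\Card(P_1\cap P_2)$. Applying this to the two subdivisions of $J_w$ gives $e_S(w)=\ell_S(w)+r_S(w)-1-t_w$, i.e. $m_S(w)=-t_w$, where $t_w$ is the number of points of $J_w$ that are at once an interior singularity $\gamma$ of $T$ and the image $T^{|w|}(\delta)$ of an interior singularity $\delta$ of $T^{-1}$ lying in $I_w$. Such a point yields a triple $(\delta,\gamma,|w|)$ with $T^{|w|}(\delta)=\gamma$, that is, a connection of length $|w|$. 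For nonempty $w$ this length is at least $1$, and since by hypothesis all connections of $T$ have length $0$, we get $t_w=0$, hence $m_S(w)=0$: every nonempty word of $\mathcal{L}(T)$ is neutral.

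It remains to evaluate $m_S(\varepsilon)$. Here $J_\varepsilon=I_\varepsilon={]\ell,r[}$, the two subdivisions are the partitions $(I_b)_b$ and $(J_a)_a$ themselves, so $\ell_S(\varepsilon)=r_S(\varepsilon)=\Card(A)$, and $t_\varepsilon$ counts the points of ${]\ell,r[}$ that are simultaneously an interior singularity of $T$ and of $T^{-1}$, i.e. the connections of $T$ of length $0$ lying strictly inside $I$. By the description of the components of $I$ given before the statement, there are $c-1$ of these, so $m_S(\varepsilon)=-(c-1)=1-c$ and $\chi(S)=1-m_S(\varepsilon)=c$. The only step requiring care is the geometric dictionary of the first paragraph — that $T^{|w|}$ is affine on $I_w$ and that the left, right and two-sided extensions of $w$ correspond respectively to the two subdivisions of $J_w$ and to their intersecting cells; granting this, the rest is the refinement count together with the definition of a connection. (One should also record at the outset that $\mathcal{L}(T)$ is biextendable, which follows from $J_w$ being a nonempty open interval for each $w\in\mathcal{L}(T)$.)
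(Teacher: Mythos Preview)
Your argument is correct and follows a genuinely different route from the paper's. The paper actually proves the stronger statement that $\mathcal{L}(T)$ is a \emph{tree set} of characteristic $c$: it first shows every $E(w)$ is acyclic by an ordering argument on reduced paths (if $a_1<_2 a_2$ along a path then the $a_i$ and the $wb_i$ are strictly monotone, so the path cannot close up), and then invokes Lemma~\ref{lem:components} to get connectedness of $E(w)$ for $w\neq\varepsilon$ and to count the components of $E(\varepsilon)$ via the components of $I$. Neutrality is then a consequence of the tree structure. Your approach bypasses the graph topology entirely: you read $e_S(w)$ as the number of cells in the common refinement of two interval partitions of $J_w$ and obtain $m_S(w)=-t_w$ directly, with $t_w$ the number of coincidences between the two cut-point sets. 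This is more elementary (no connectedness lemma) and makes transparent that a nonempty $w$ fails to be neutral exactly when there is a connection of length $|w|$. The price is that you do not recover the tree property of $E(w)$, only its Euler characteristic.

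One bookkeeping point: your closing count $t_\varepsilon=c-1$ does not literally match the hypothesis ``exactly $c$ connections, all of length $0$'', which gives $t_\varepsilon=c$ and hence $\chi(S)=c+1$. You have inferred $c-1$ from the paper's description of the components of $I$, but that passage itself has an off-by-one (with $c$ interior division points one obtains $c+1$ subintervals; compare Example~\ref{ex:3iet}, which has a single connection yet characteristic~$2$). Your computation is internally consistent and yields $\chi(S)=(\text{number of length-}0\text{ connections})+1=(\text{number of components of }I)$; only the labelling of this integer as ``$c$'' is at issue, and that ambiguity is inherited from the paper rather than introduced by your proof.
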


\begin{lemma}
\label{lem:awa}
Let $T$ be an interval exchange transformation.
For every nonempty word $w$ and letter $a \in A$, one has
\begin{enumerate}
\item[\rm (i)] $a \in L(w) \Longleftrightarrow I_w \cap J_a \neq \emptyset$,
\item[\rm (ii)] $a \in R(w) \Longleftrightarrow I_a \cap J_w \neq \emptyset$
\end{enumerate}
\end{lemma}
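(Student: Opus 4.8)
The plan is to reduce each of the two equivalences to a statement about emptiness of cylinders, and then to transport that statement through the map $T$. Recall from the preceding remarks that $a \in L(w)$ means $aw \in \mathcal{L}(T)$, that $a \in R(w)$ means $wa \in \mathcal{L}(T)$, and that a word $u$ lies in $\mathcal{L}(T)$ if and only if $I_u \neq \emptyset$. So proving (i) amounts to showing $I_{aw} \neq \emptyset \Leftrightarrow I_w \cap J_a \neq \emptyset$, and proving (ii) amounts to $I_{wa} \neq \emptyset \Leftrightarrow I_a \cap J_w \neq \emptyset$. Writing $w = b_0 b_1 \cdots b_m$ and unfolding the definition $I_v = I_{c_0} \cap T^{-1}(I_{c_1}) \cap \cdots$ for a word $v = c_0 c_1 \cdots$, together with the elementary identities $T^{-1}(X \cap Y) = T^{-1}(X) \cap T^{-1}(Y)$ and $T^{-1}(T^{-j}(X)) = T^{-(j+1)}(X)$ for preimages, I would first record the two concatenation formulas
$$I_{aw} = I_a \cap T^{-1}(I_w), \qquad I_{wa} = I_w \cap T^{-|w|}(I_a),$$
where $T^{-1}$ denotes the full preimage.

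For (i), the first formula says $I_{aw} \neq \emptyset$ iff there is $z \in I_a$ with $T(z) \in I_w$. Since the restriction of $T$ to $I_a$ is a translation or a symmetry onto $J_a$, we have $T(I_a) = J_a$, so such a $z$ exists iff some point of $J_a$ lies in $I_w$, i.e. iff $I_w \cap J_a \neq \emptyset$. For (ii), the second formula says $I_{wa} \neq \emptyset$ iff there is $z \in I_w$ with $T^{|w|}(z) \in I_a$; since $J_w = T^{|w|}(I_w)$ by definition, this happens iff some point of $J_w$ lies in $I_a$, i.e. iff $I_a \cap J_w \neq \emptyset$. Combined with the reductions above, this yields both equivalences.

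The main thing to be careful about is that $T$ is not a bijection of the whole interval $I$, so one must never invert $T$ (or $T^{|w|}$) globally; but this is not needed here, since in (i) we use only the identity $T(I_a)=J_a$ and in (ii) only $T^{|w|}(I_w)=J_w$, both of which come directly from the definitions of the cylinders $J_a$ and $J_w$ and of $T$ on each $I_a$. (If one wishes, $T^{|w|}$ in fact restricts to a bijection $I_w \to J_w$, being a composition of translations and symmetries, but this stronger statement is not required.) So I do not expect any real obstacle: the whole argument is a short unfolding of the definitions.
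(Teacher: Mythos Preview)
Your argument is correct and follows essentially the same route as the paper: the paper also unfolds the cylinder definitions and computes $T(I_{aw}) = T(I_a) \cap I_w = J_a \cap I_w$, then declares (ii) to be symmetric. Your version is a bit more explicit about why only the identities $T(I_a)=J_a$ and $T^{|w|}(I_w)=J_w$ are needed (and not global invertibility of $T$), which is a welcome clarification but not a different idea.
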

\begin{proof}
A letter $a$ is in the set $L(w)$ if and only if $a w \in \mathcal{L}(T)$.
As we have seen before, this is equivalent to $J_{aw} \neq \emptyset$.
One has $J_{aw} = T(I_{aw}) = T(I_a) \cap I_w = J_a \cap I_w$, whence point (i).
Point (ii) is proved symmetrically.
\end{proof}

We say that a path in a graph is \emph{reduced} if it does not use twice consecutively the same edge.

\begin{lemma}
\label{lem:components}
Let $T$ be an interval exchange transformation over $I$ without connection of length $\geq 1$.
Let $w \in \mathcal{L}(T)$ and $a,b \in L(w)$ (resp. $a,b \in R(w)$).
Then $1 \otimes a, 1 \otimes b$ (resp. $a \otimes 1, b \otimes 1$) are in the same connected component of $E(w)$ if and only if $J_a, J_b$ (resp. $I_a, I_b$) are in the same component of $I$.
\end{lemma}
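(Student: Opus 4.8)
The plan is to replace $E(w)$ by the overlap graph of two partitions of the interval $I_w$ into subintervals and then to run the standard ``staircase'' argument. For $a\in L(w)$ put $J_{aw}=J_a\cap I_w$ and for $b\in R(w)$ put $I_{wb}=I_w\cap T^{-|w|}(I_b)$. Using Lemma~\ref{lem:awa}, together with $T(I_{aw})=J_a\cap I_w$ and $T^{|w|}(I_{wb})=J_w\cap I_b$, one checks that $(J_{aw})_{a\in L(w)}$ and $(I_{wb})_{b\in R(w)}$ are two partitions of the open interval $I_w$ into open subintervals (up to finitely many points), and that for $a\in L(w)$, $b\in R(w)$ one has $awb\in\mathcal{L}(T)$, i.e. $I_{awb}=I_a\cap T^{-1}(I_{wb})\neq\emptyset$, exactly when $J_{aw}\cap I_{wb}\neq\emptyset$. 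Hence $1\otimes a\mapsto J_{aw}$ and $b\otimes 1\mapsto I_{wb}$ is an isomorphism of $E(w)$ onto the overlap graph of these two partitions, that is, the bipartite graph whose vertices are the pieces and whose edges join two pieces that meet.

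I would next prove the following elementary fact and apply it with $K=I_w$: given two partitions $\mathcal P,\mathcal Q$ of an interval $K$ into open subintervals, let $D$ be the set of points that are an endpoint both of a piece of $\mathcal P$ and of a piece of $\mathcal Q$; then two pieces lie in the same connected component of the overlap graph if and only if they are contained in the same connected component of $K\setminus D$. Indeed, refining $\mathcal P$ and $\mathcal Q$ to a common partition into cells, each cell determines one edge, and consecutive cells determine edges sharing a vertex unless the point separating them lies in $D$; moreover no piece of $\mathcal P$ or of $\mathcal Q$ straddles a point of $D$, since such a point is a division point of both partitions. This gives both implications, and shows that the components of $K\setminus D$ -- each a union of consecutive pieces of $\mathcal P$ and also of $\mathcal Q$ -- are precisely the vertex sets of the connected components of the overlap graph.

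The crux is to identify $D$ in our situation. A point $p$ interior to $I_w$ is an endpoint of some $J_{aw}$ exactly when $p$ is a singularity of $T^{-1}$, and it is an endpoint of some $I_{wb}$ exactly when $T^{|w|}(p)$ is a singularity of $T$: here I use that $T^{|w|}$ restricts to a homeomorphism $I_w\to J_w$ carrying the partition $(J_w\cap I_b)_{b\in R(w)}$ of $J_w$ back onto $(I_{wb})_{b\in R(w)}$, and that the interior division points of $(J_w\cap I_b)_b$ are exactly the singularities of $T$ lying in $J_w$ -- none of these can be $\ell$ or $r$, so they are genuine singularities. Consequently $p\in D$ if and only if $(p,\,T^{|w|}(p),\,|w|)$ is a connection of $T$.

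It then remains to split on $w$. If $w\neq\varepsilon$ then $I_w\subseteq I_{w_0}$, so $I_w$ is contained in a single component of $I$; since for $a,b\in L(w)$ the intervals $J_a,J_b$ meet $I_w$ and an open interval cannot contain a connection point of length $0$ (a boundary point of the $J$'s), both $J_a$ and $J_b$ lie in that same component, while a point of $D$ inside $I_w$ would be a connection of length $|w|\geq 1$, excluded by hypothesis, so $E(w)$ is connected; thus both sides of the asserted equivalence hold. If $w=\varepsilon$ then $J_{aw}=J_a$ and $I_{wb}=I_b$, and by the previous paragraph $D$ is exactly the set of connections of $T$ of length $0$ interior to $I$, i.e. the set of points delimiting the components of $I$; hence by the staircase fact $1\otimes a$ and $1\otimes b$ are in the same component of $E(\varepsilon)$ if and only if $J_a$ and $J_b$ lie in the same component of $I\setminus D$, that is, in the same component of $I$. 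The assertion about the right vertices $a\otimes 1,b\otimes 1$ and the intervals $I_a,I_b$ is obtained symmetrically, for instance by applying the whole argument to $T^{-1}$. The step I expect to be the main obstacle is the bookkeeping in the previous paragraph -- correctly pinning down the division points of each partition inside $I_w$, tracking them through $T^{|w|}$, and discarding the trivial endpoints $\ell,r$ -- together with making the staircase fact precise enough that ``no path can cross a point of $D$'' is fully rigorous.
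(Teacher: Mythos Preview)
Your argument is correct and, at its core, rests on the same ``staircase'' idea as the paper: walk along consecutive pieces of the two interval partitions of $I_w$, noting that an edge of $E(w)$ corresponds to an overlap $J_a\cap I_{wb}\neq\emptyset$. The difference is packaging. The paper argues directly: it first observes that for $w\neq\varepsilon$ every $J_a$ with $a\in L(w)$ meets $I_w$ and hence sits in the component of $I$ containing $I_w$, and then, for the converse, builds the path $1\otimes a_1,\,b_1\otimes 1,\ldots,1\otimes a_n$ by hand, using that a boundary separating two consecutive $J_{a_i}$'s cannot also separate two consecutive $I_{wb_j}$'s unless there were a connection of length $|w|$. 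You instead abstract this into the overlap graph of two partitions of $I_w$ and isolate the set $D$ of common division points, proving once and for all that components of the overlap graph correspond to components of $I_w\setminus D$; then you identify $D$ with the connections of length $|w|$. This buys you a cleaner, uniform treatment of both implications --- in particular your argument handles the ``only if'' direction for $w=\varepsilon$ explicitly, which the paper's first paragraph does not really cover (its argument that $J_a$ and $I_w$ lie in one component is vacuous when $I_w=I$). One minor point: your notation $J_{aw}:=J_a\cap I_w$ clashes with the paper's convention $J_u=T^{|u|}(I_u)$, since $J_a\cap I_w=T(I_{aw})$ rather than $T^{|aw|}(I_{aw})$; it would be safer to use a fresh symbol.
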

\begin{proof}
Let $a \in L(w)$.
Since the set $\mathcal{L}(T)$ is biextendable, there exists a letter $c$ such that $(1 \otimes a, c \otimes 1) \in E(w)$.
Using the same reasoning that in Lemma~\ref{lem:awa}, one has $J_a \cap I_{wc} \neq \emptyset$.
Since $I_{wc} \subset I_w$, one has in particular $J_a \cap I_w \neq \emptyset$.
This proves that $J_a, I_w$ belong to the same component of $I$ for every $a \in L(w)$.

Conversely, suppose that $a,b \in L(w)$ are such that $J_a, J_b$ belong to the same component of $I$.
We may assume that $a <_2 b$.
Then, there is a reduced path $(1 \otimes a_1, b_1 \otimes 1, \ldots, b_{n-1} \otimes 1, 1 \otimes a_n)$ in $E(w)$ (see Figure~\ref{fig:path}) with $a=a_1$, $b=a_n$, $a_1 <_2 \cdots <_2 a_n$ and $wb_1 <_1 \cdots <_1 wb_{n_1}$.
Indeed, by hypothesis, we have no connection of length $\geq 1$.
Thus, for every $1 \leq i < n$, one has $J_{a_i} \cap I_{wb_i} \neq \emptyset$ and $J_{a_{i+1}} \cap I_{wb_i} \neq \emptyset$.
Therefore, $a,b$ are in the same connected component of $E(w)$.

The symmetrical statement is proved similarly.
\end{proof}

We can now proof the main result of this section.

\begin{proof}[of Theorem~\ref{theo:quasitree}]
Let us first prove that for any $w \in \mathcal{L}(T)$, the graph $E(w)$ is acyclic.
Assume that $(1 \otimes a_1, b_1 \otimes 1, \ldots, 1 \otimes a_n, b_n \otimes 1)$ is a reduced path in $E(w)$ with $a_1, \ldots, a_n \in L(w)$ and $b_1, \ldots, b_n \in R(w)$.
Suppose that $n \geq 2$ and that $a_1 <_2 a_2$.
Then one has $a_1 <_2 \cdots <_2 a_n$ and $wb_1 <_1 \cdots <_1 wb_n$ (see Figure~\ref{fig:path}).
Thus one cannot have an edge $(a_1, b_n)$ in the graph $E(w)$.

\begin{figure}[hbt]
\centering
\gasset{Nh=2,Nw=2,ExtNL=y,NLdist=2,AHnb=0}
\begin{picture}(100,12)(0,-2)

\node(1)(5,10){}
\node[Nframe=n](1a)(5,0){}
\node(2)(20,10){}
\node[Nframe=n](2a)(20,0){}
\node(3)(25,10){}
\node[Nframe=n](3a)(25,0){}
\node(4)(35,10){}

\node(5)(54,10){}
\node[Nframe=n](5a)(54,00){}
\node(6)(70,10){}
\node[Nframe=n](6a)(70,0){}
\node(7)(78,10){}
\node[Nframe=n](7a)(78,0){}
\node(8)(100,10){}

\put(40,5){$\cdots$}

\node(1b)(0,0){}
\node(2b)(13,0){}
\node[Nframe=n](2ba)(13,10){}
\node(3b)(17,0){}
\node[Nframe=n](3ba)(17,10){}
\node(4b)(27,0){}
\node[Nframe=n](4ba)(27,10){}

\node(5b)(47,0){}
\node(6b)(57,0){}
\node[Nframe=n](6ba)(57,10){}
\node(7b)(62,0){}
\node[Nframe=n](7ba)(62,10){}
\node(8b)(84,0){}
\node[Nframe=n](8ba)(84,10){}

\drawedge(1,2){$I_{wb_1}$}
\drawedge(3,4){$I_{wb_2}$}
\drawedge(5,6){$I_{wb_{n-1}}$}
\drawedge(7,8){$I_{wb_n}$}

\drawedge[ELside=r](1b,2b){$J_{a_1}$}
\drawedge[ELside=r](3b,4b){$J_{a_2}$}
\drawedge[ELside=r](5b,6b){$J_{a_{n-1}}$}
\drawedge[ELside=r](7b,8b){$J_{a_n}$}

\drawedge[AHnb=0,dash={0.2 0.5}0](1,1a){}
\drawedge[AHnb=0,dash={0.2 0.5}0](2,2a){}
\drawedge[AHnb=0,dash={0.2 0.5}0](2b,2ba){}
\drawedge[AHnb=0,dash={0.2 0.5}0](3,3a){}
\drawedge[AHnb=0,dash={0.2 0.5}0](3b,3ba){}
\drawedge[AHnb=0,dash={0.2 0.5}0](4b,4ba){}
\drawedge[AHnb=0,dash={0.2 0.5}0](5,5a){}
\drawedge[AHnb=0,dash={0.2 0.5}0](6,6a){}
\drawedge[AHnb=0,dash={0.2 0.5}0](6b,6ba){}
\drawedge[AHnb=0,dash={0.2 0.5}0](7,7a){}
\drawedge[AHnb=0,dash={0.2 0.5}0](7b,7ba){}
\drawedge[AHnb=0,dash={0.2 0.5}0](8b,8ba){}

\end{picture}
\caption{A path from $a_1$ to $a_n$ in $E(w)$.}
\label{fig:path}
\end{figure}
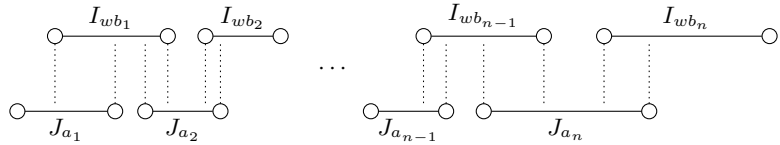

Let us now prove that the extension graph of the empty word is a union of $c$ trees.
Let $a,b \in A$.
If $J_a$ and $J_b$ are in the same component of $I$, then $1 \otimes a, 1\otimes b$ are in the same connected component of $E(\varepsilon)$ by Lemma~\ref{lem:components}.
Thus $E(\varepsilon)$ is a union of $c$ trees.

Finally, if $w \in \mathcal{L}(T)$ is a nonempty word and $a,b \in L(w)$, then $J_a$ and $J_b$ are in the same component of $I$, by Lemma~\ref{lem:awa}, and thus $a,b$ are in the same connected component of $E(w)$ by Lemma~\ref{lem:components}.
Thus $E(w)$ is a tree.
\end{proof}

The previous proof shows actually a stronger result: the set $\mathcal{L}(T)$ is a tree set of characteristic $c$.
This result generalizes the corresponding result for regular interval exchange in~\cite{acyclicconnectedandtreesets}.

\begin{example}
Let $T$ be the interval exchange transformation of Example~\ref{ex:3iet}.
In Figure~\ref{fig:trees} are represented the extension graphs of the empty word (left) and of the letters $a$ (center) and $b$ (right).
\end{example}

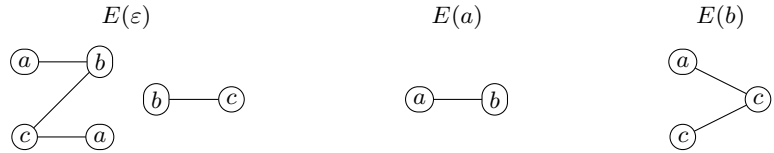
\begin{figure}[hbt]
\centering
\gasset{Nadjust=wh,AHnb=0,ELside=r}
\begin{picture}(97.5,17)
\node[Nframe=n](ee)(13.5,16){$E(\varepsilon)$}
\node(eal)(0,10){$a$}
\node(ecl)(0,0){$c$}
\node(ebr)(10,10){$b$}
\node(ear)(10,0){$a$}
\node(ebl)(17.5,5){$b$}
\node(ecr)(27.5,5){$c$}
\drawedge(eal,ebr){}
\drawedge(ecl,ebr){}
\drawedge(ecl,ear){}
\drawedge(ebl,ecr){}

\node[Nframe=n](ee)(57.5,16){$E(a)$}
\node(acl)(52.5,5){$a$}
\node(abr)(62.5,5){$b$}
\drawedge(acl,abr){}

\node[Nframe=n](ee)(92.5,16){$E(b)$}
\node(bal)(87.5,10){$a$}
\node(bcl)(87.5,0){$c$}
\node(bcr)(97.5,5){$c$}
\drawedge(bal,bcr){}
\drawedge(bcl,bcr){}
\end{picture}
\caption{Some extension graphs.}
\label{fig:trees}
\end{figure}

\newpage

\bibliographystyle{plain}
\bibliography{dlt15}

\end{document}